\newcommand{\z}{\zeta}
\newcommand{\Z}{\mathbb{Z}}
\newcommand{\N}{\mathbb{N}}
\newcommand{\C}{\mathbb{C}}
\newcommand{\D}{\mathbb{D}}
\newcommand{\zt}{\zeta_{12}}
\newcommand{\Zzt}{\mathbb{Z}[\zeta_{12}]}
\newcommand{\Dzt}{\mathcal{R}_{12}}
\newcommand{\dzt}{\delta}
\newcommand{\Q}{\mathbb{Q}}
\DeclareMathOperator{\diag}{diag}
\DeclareMathOperator{\sgn}{sgn}
\DeclareMathOperator{\lde}{lde}
\newcommand{\s}[1]{\{#1\}}
\newcommand{\matrices}{\mathrm{M}}
\newcommand{\unitaries}{\mathrm{U}}
\newtheoremstyle{break}
  {}
  {}
  {\itshape}
  {}
  {\bfseries}
  {.}
  {\newline}
  {}
\theoremstyle{plain}
\newtheorem{theorem}{Theorem}[section]
\newtheorem{lemma}[theorem]{Lemma}
\newtheorem*{lemma*}{Lemma}
\newtheorem{proposition}[theorem]{Proposition}
\newtheorem*{proposition*}{Proposition}
\newtheorem{corollary}[theorem]{Corollary}
\theoremstyle{break}
\theoremstyle{definition}
\theoremstyle{remark}
\newtheorem{remark}[theorem]{Remark}
\newcommand{\urlaltve}[2]{\href{#2}{\nolinkurl{#1}}}
\title{
  Contributions to the Theory of Clifford-Cyclotomic Circuits
  }
\author{Linh Dinh \& Neil J. Ross
\institute{Dalhousie University}}
\begin{document}
\maketitle

\begin{abstract}
  Let $n$ be a positive integer divisible by 8. The
  Clifford-cyclotomic gate set $\mathcal{G}_n$ consists of the
  Clifford gates, together with a $z$-rotation of order $n$. It is
  easy to show that, if a circuit over $\mathcal{G}_n$ represents a
  unitary matrix $U$, then the entries of $U$ must lie in
  $\mathcal{R}_n$, the smallest subring of $\mathbb{C}$ containing
  $1/2$ and $\mathrm{exp}(2\pi i/n)$. The converse implication, that
  every unitary $U$ with entries in $\mathcal{R}_n$ can be represented
  by a circuit over $\mathcal{G}_n$, is harder to show, but it was
  recently proved to be true when $n=2^k$. In that case, $k-2$
  ancillas suffice to synthesize a circuit for $U$, which is known to
  be minimal for $k=3$, but not for larger values of $k$. In the
  present paper, we make two contributions to the theory of
  Clifford-cyclotomic circuits. Firstly, we improve the existing
  synthesis algorithm by showing that, when $n=2^k$ and $k\geq 4$,
  only $k-3$ ancillas are needed to synthesize a circuit for $U$,
  which is minimal for $k=4$.  Secondly, we extend the existing
  synthesis algorithm to the case of $n=3\cdot 2^k$ with $k\geq 3$.
\end{abstract}

\section{Introduction}
\label{sec:intro}

\subsection{Background}
\label{ssec:background}

Let $n$ be a positive integer divisible by 8. The
\textbf{Clifford-cyclotomic gate set of degree $n$}, which we denote
by $\mathcal{G}_n$, consists of the usual \textbf{Clifford gates}
    \[
    H = \frac{1}{\sqrt{2}}\begin{bmatrix}
        1 & 1 \\
        1 & -1
    \end{bmatrix}, 
    \qquad 
    S = \begin{bmatrix}
        1 & 0 \\
        0 & i
    \end{bmatrix}, 
    \qquad
    \mbox{and}
    \qquad
    CX = \begin{bmatrix}
        1 & 0 & 0 & 0 \\
        0 & 1 & 0 & 0 \\
        0 & 0 & 0 & 1 \\
        0 & 0 & 1 & 0
    \end{bmatrix}, 
    \]
together with the $z$-rotation of order $n$   
    \[
    T_n = \begin{bmatrix}
        1 & 0 \\
        0 & \zeta_n
    \end{bmatrix},
    \]
where $\z_{n}$ is the \textbf{primitive $n$-th root of unity} $\z_{n}=
e^{2\pi i /n}$. For any $n$ divisible by 8, $\mathcal{G}_n$ is
universal for quantum computation. The gate sets $\mathcal{G}_8$ and
$\mathcal{G}_{16}$ are also known as the \textbf{Clifford+$T$} and
\textbf{Clifford+$\sqrt{T}$} gate sets,
respectively. Clifford-cyclotomic circuits are important to the design
of quantum algorithms \cite{shor}, the theory of fault-tolerant
quantum computation \cite{distillation,hierarchy}, and the study of
quantum complexity \cite{qma}. Because of this, they have received
significant attention in the literature
\cite{qubitcyclo,Amy2020,forest2015exact,Giles2013a,qutritcyclo,ingallsjordankeetonloganzaytman2021,kliuchnikov2013fast}.

An important question in the theory of Clifford-cyclotomic circuits is
that of precisely characterizing the matrices that can be exactly
represented by a circuit over $\mathcal{G}_n$. Let $\mathcal{R}_n$ be
the smallest subring of $\mathbb{C}$ containing $1/2$ and $\z_{n}$. It
is easy to see that, if a circuit over $\mathcal{G}_n$ represents a
unitary matrix $U$, then the entries of $U$ must lie in
$\mathcal{R}_n$. The converse implication, that every unitary $U$ with
entries in $\mathcal{R}_n$ can be represented by a circuit over
$\mathcal{G}_n$, is harder to show. Indeed, until recently, it was
only known to be true for the Clifford+$T$ gate set $\mathcal{G}_8$,
as well as for a handful of adjacent gate sets
\cite{Amy2020,Giles2013a}. Recently, however, it was shown that this
converse is true whenever $n$ is a power of 2 \cite{qubitcyclo}. In
that case, $\log(n)-2$ ancillas suffice to synthesize a circuit for
$U$, which is known to be minimal for $n=8$, but not for larger powers
of 2. This recent exact synthesis result naturally raises two
questions. Firstly, when $n$ is a power of 2, can the number of
ancillas needed to synthesize a Clifford-cyclotomic circuit of degree
$n$ be made smaller than $\log(n)-2$? Secondly, can one prove an exact
synthesis result for Clifford-cyclotomic circuits whose degree is not
a power of 2? In the present paper, we contribute to the theory of
Clifford-cyclotomic circuits by answering both questions positively.

\subsection{Contributions}
\label{ssec:contribs}

Let $m$ be a positive integer.

We prove that any $2^m$-dimensional unitary $U$ with entries in
$\mathcal{R}_{16}$ can be exactly represented by an $m$-qubit circuit
over $\mathcal{G}_{16}$ using at most $1$ ancilla, which is
minimal. We establish this result through a study of the effect of
catalytic embeddings \cite{catemb} on the determinant which may be of
independent interest. We then use this result about circuits over
$\mathcal{G}_{16}$ to save an ancilla in synthesizing circuits over
$\mathcal{G}_{2^k}$, for $k\geq 4$.

We also prove, inspired by the results of \cite{twelfth}, that any
$2^m$-dimensional unitary $U$ with entries in $\mathcal{R}_{3\cdot
  2^k}$ with $k \geq 3$ can be exactly represented by an $m$-qubit
circuit over $\mathcal{G}_{3\cdot 2^k}$, thereby establishing a
number-theoretic characterization for Clifford-cyclotomic circuits
whose degree is not a power of 2.

\section{Rings}
\label{sec:nt}

We now introduce the rings that will be important in what follows and
we discuss some of their properties. For further details, we encourage
the reader to consult \cite{dandf,cyclo}.

We assume that rings have a multiplicative identity. If $R$ is a ring
and $u\in R$, we write $R/(u)$ for the \textbf{quotient of $R$ by the
  ideal $(u)$}. Two elements $v$ and $v'$ of the ring $R$ are
\textbf{congruent modulo $u$}, if their difference is a multiple of
$u$, that is, if $v-v'\in (u)$. In that case, we write $v\equiv_u v'$,
or $v\equiv v' \pmod{u}$. The relation $\equiv_u$ is an equivalence
relation on $R$ and the elements of $R/(u)$ are precisely the
equivalence classes of elements of $R$ under the relation
$\equiv_u$. We sometimes refer to these equivalence classes as
\textbf{residues}.

\subsection{Cyclotomic integers}
\label{ssec:cyclo}

We write $\zeta_n$ for the \textbf{primitive $n$-th root of unity}
$\z_{n} = e^{2 \pi i /n}$. The \textbf{ring of cyclotomic integers
  $\Z[\z_{n}]$} is the smallest subring of $\C$ that contains
$\z_{n}$. Since $\zeta_n^\dagger = \zeta_n^{n-1}$, the ring $\Z[\z_n]$
is closed under complex conjugation.

Let $\varphi$ denote \textbf{Euler's totient function}, so that
$\varphi(n)$ counts the integers in $\s{1,\ldots, n}$ that are
relatively prime to $n$. The ring $\Z[\z_{n}]$ can be characterized as
\begin{equation}
  \label{eq:cycloring}
  \Z[\z_n] = \left\{ a_0 + a_1\z_n + a_2\z_n^2 + \cdots +
  a_{\varphi(n)-1}\z_n^{\varphi(n)-1} \mid a_0,\ldots,
  a_{\varphi(n)-1} \in \Z \right\}.
\end{equation}
Every element $u\in\Z[\z_n]$ can be uniquely expressed as a
$\Z$-linear combination of powers of $\z_n$ as in
\cref{eq:cycloring}. That is, we have
\[
a_0 + a_1\z_n +  \cdots +
  a_{\varphi(n)-1}\z_n^{\varphi(n)-1} = a_0' + a_1'\z_n +  \cdots +
  a_{\varphi(n)-1}'\z_n^{\varphi(n)-1}
\]
if and only if $a_j=a_j'$ for $0\leq j \leq \varphi(n)-1$.

We will be interested in two families of rings of cyclotomic integers:
the one corresponding to $n=2^k$, and the one corresponding to
$n=3\cdot 2^k$. For $k\leq k'$, we have $\Z[\zeta_{2^k}]\subseteq
\Z[\zeta_{2^{k'}}]$ (resp. $\Z[\zeta_{3\cdot 2^k}]\subseteq
\Z[\zeta_{3\cdot 2^{k'}}]$). Moreover, for $k\geq 1$, every element
$u\in\Z[\z_{2^{k+1}}]$ (resp. $u\in \Z[\z_{3\cdot 2^{k+1}}]$) can be
uniquely written as $u=a+b\z_{2^{k+1}}$ (resp. $u=a+b\z_{3\cdot
  2^{k+1}}$), with $a,b\in \Z[\z_{2^{k}}]$ (resp. $a,b\in
\Z[\z_{3\cdot 2^{k}}]$).

We define the ring $\mathcal{R}_n$ as the smallest subring of $\C$
that contains $1/2$ and $\zeta_n$. The ring $\mathcal{R}_n$ can be
characterized as in \cref{eq:cycloring}. Indeed, we have
\begin{equation}
  \label{eq:cycloring2}
  \mathcal{R}_n = \left\{ a_0 + a_1\z_n + a_2\z_n^2 + \cdots +
  a_{\varphi(n)-1}\z_n^{\varphi(n)-1} \mid a_0,\ldots,
  a_{\varphi(n)-1} \in \D \right\},
\end{equation}
where $\D = \s{a/2^\ell \mid a\in \Z \mbox{ and } \ell\in \N}$ is the
ring of \textbf{dyadic fractions}. Every element $u\in\mathcal{R}_n$
can be uniquely expressed as in \cref{eq:cycloring2}.

If $n$ is divisible by 8, then $n=8d$, so that $\z_n^{2d}=\z_4=i$,
$\z_{n}^d = \z_8$, and $\z_n^d + \z_n^{-d} = \z_8 + \z_8^\dagger
=\sqrt{2}$. Hence, in that case, we have $i\in\mathcal{R}_n$ and
$1/\sqrt{2}\in \mathcal{R}_n$, which implies that the entries of the
gates $H$, $S$, $CX$, and $T_n$ belong to $\mathcal{R}_n$. Thus,
whenever $n$ is divisible by 8, any matrix that can be represented by
a circuit over $\mathcal{G}_n$ has entries in $\mathcal{R}_n$.

\subsection{The ring \texorpdfstring{$\Z[\z_{12}]$}{Z[zeta12]}}
\label{ssec:ztwelve}

The ring $\Z[\z_{12}]$ will play an important role in
\cref{sec:12}. We now record some of its relevant properties. We know
from \cref{eq:cycloring} that
\[
\Z[\z_{12}] = \left\{ a_0 + a_1\z_{12} + a_2\z_{12}^2 +a_3\z_{12}^3
\mid a_0, a_1, a_2, a_3 \in\Z \right\}.
\]

We define $\dzt \in \Z[\zt]$ as $\dzt = 1+ \zt^3 = 1+i$. The
cyclotomic integer $\dzt$ is prime in $\Z[\zt]$ and the prime
factorization of $2$ in $\Zzt$ is given by
\begin{equation}
\label{eq:factortwo}
2 = (1+i)^2(-i) = \dzt^2(-i).
\end{equation}
Now consider an element $u\in\mathcal{R}_{12}$. By
\cref{eq:cycloring2,eq:factortwo}, we can write $u$ as
$u=u'/\delta^\ell$, with $u'\in\Zzt$ and $\ell\in\N$. The smallest
such $\ell$ is called the \textbf{least denominator exponent of $u$}
and is denoted $\lde(u)$. Equivalently, $\lde(u)$ is the smallest
$\ell\in\N$ such that $\delta^\ell u \in \Zzt$. More generally, if $M$
is a matrix (or a vector) with entries in $\mathcal{R}_{12}$, then
$\lde(M)$ is the smallest $\ell$ such that $\delta^\ell M$ is a matrix 
(or a vector) over $\Zzt$.

\begin{proposition}
    \label{prop:12thquotient}
    We have:
    \begin{itemize}
        \item $\Zzt/(2) = \{a_0+a_1\zt+a_2\zt^2+a_3\zt^3 \mid a_0,
          a_1, a_2, a_3 \in \{0,1\}\}$, and
        \item $\Zzt/(\dzt) = \{0, 1, \zt, \zt^2\}$.
    \end{itemize}
\end{proposition}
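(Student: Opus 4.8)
The plan is to derive both claims from the unique-representation property of $\Zzt$ recorded just after \cref{eq:cycloring}, together with the two facts about $\dzt$ already established above: that $\dzt$ is prime and that $2 = \dzt^2(-i)$. As a preliminary observation, $\dzt \mid 2$, so $2 \equiv 0 \pmod{\dzt}$; and since $\dzt = 1 + \zt^3 = 1 + i$, we also have $\zt^3 = i \equiv -1 \equiv 1 \pmod{\dzt}$.

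For the first claim, I would show that the sixteen displayed elements form a complete and irredundant set of residues modulo $2$. Completeness is immediate: given $u = \sum_{j=0}^{3} b_j \zt^j$ with $b_j \in \Z$, write $b_j = 2q_j + a_j$ with $a_j \in \s{0,1}$; then $u - \sum_j a_j\zt^j = 2\sum_j q_j\zt^j \in (2)$, so $u \equiv \sum_j a_j\zt^j \pmod{2}$. Irredundancy follows from uniqueness of representation: if $\sum_j a_j\zt^j \equiv \sum_j a_j'\zt^j \pmod{2}$ with $a_j, a_j' \in \s{0,1}$, then $\sum_j (a_j - a_j')\zt^j = 2\beta$ for some $\beta \in \Zzt$; writing $\beta$ in the basis $1,\zt,\zt^2,\zt^3$ and comparing coefficients forces each $a_j - a_j'$ to be even, hence $a_j = a_j'$.

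For the second claim, I would first check that $\s{0,1,\zt,\zt^2}$ exhausts $\Zzt/(\dzt)$. Since $2 \equiv 0 \pmod{\dzt}$, every residue modulo $\dzt$ is represented by some $a_0 + a_1\zt + a_2\zt^2 + a_3\zt^3$ with $a_j \in \s{0,1}$. Using $\zt^3 \equiv 1$ together with the relation $\zt^4 = \zt^2 - 1$ satisfied by $\zt$, one gets $\zt^2 - 1 = \zt^4 = \zt\cdot\zt^3 \equiv \zt \pmod{\dzt}$, i.e.\ $\zt^2 \equiv \zt + 1 \pmod{\dzt}$; substituting and reducing coefficients modulo $2$ once more collapses every residue to one of $0$, $1$, $\zt$, or $1 + \zt = \zt^2$. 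It remains to see that these four residues are pairwise distinct. Here I would use a counting argument: from $2 = \dzt^2(-i)$ we have $(2) = (\dzt)^2$, so the first claim gives $|\Zzt/(\dzt)^2| = 16$; since $\dzt$ is a nonzerodivisor, multiplication by $\dzt$ induces a $\Zzt$-module isomorphism $\Zzt/(\dzt) \cong (\dzt)/(\dzt)^2$, and the exact sequence $0 \to (\dzt)/(\dzt)^2 \to \Zzt/(\dzt)^2 \to \Zzt/(\dzt) \to 0$ gives $|\Zzt/(\dzt)|^2 = 16$, hence $|\Zzt/(\dzt)| = 4$. Together with the upper bound just obtained, this forces the four residues $0,1,\zt,\zt^2$ to be distinct, so $\Zzt/(\dzt) = \s{0,1,\zt,\zt^2}$.

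The step I expect to require the most care is this last one: establishing that there are exactly four residue classes, i.e.\ the lower bound $|\Zzt/(\dzt)| \geq 4$ (the congruence bookkeeping for the upper bound is routine, but the distinctness genuinely uses the arithmetic of $\dzt$). An alternative to the counting argument would be to verify directly that the six pairwise differences $1$, $\zt$, $\zt^2$, $\zt - 1$, $\zt^2 - 1 = (\zt-1)(\zt+1)$, and $\zt^2 - \zt = \zt(\zt-1)$ are all units of $\Zzt$ --- powers of $\zt$ are units, and $\zt \pm 1$ are units because the minimal polynomial $x^4 - x^2 + 1$ of $\zt$ factors over $\Zzt$ and takes the value $1$ at both $x = 1$ and $x = -1$, so that $1 - \zt$ and $1 + \zt$ each divide $1$ in $\Zzt$ --- and then to note that the prime $\dzt$ divides no unit.
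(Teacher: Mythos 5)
Your proposal is correct, and its core coincides with the paper's proof: reduce the integer coefficients modulo $2$ for the first item, and for the second item use $2\equiv 0$, $\zt^3\equiv 1$, and the cyclotomic relation $\zt^4-\zt^2+1=0$ to get $\zt^2\equiv\zt+1\pmod{\dzt}$, collapsing every residue to one of $0,1,\zt,\zt^2$. The one genuine difference is that you also prove the listed representatives are pairwise incongruent, which the paper's proof leaves implicit: for the modulus $2$ you invoke uniqueness of the $\Z$-basis representation, and for the modulus $\dzt$ you count, via $(2)=(\dzt)^2$ and the short exact sequence for the filtration, that $|\Zzt/(\dzt)|^2=|\Zzt/(2)|=16$, hence $|\Zzt/(\dzt)|=4$. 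Both that counting argument and your alternative (that the pairwise differences $1$, $\zt$, $\zt^2$, $\zt-1$, $\zt^2-1$, $\zt^2-\zt$ are units, since $\Phi_{12}(\pm 1)=1$ makes $1\mp\zt$ units, and the prime $\dzt$ divides no unit) are sound. This extra care buys a complete justification that the quotients have exactly $16$ and $4$ elements respectively, a fact the paper relies on implicitly in \cref{tab:residues} and \cref{lem:res}; the paper's version is shorter but only establishes that the displayed sets are systems of representatives, not that they are irredundant.
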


\begin{proof}
  Let $u = a_0+a_1\zt+a_2\zt^2+a_3\zt^3 \in \Zzt$. Since $2\equiv
  0\pmod{2}$, the coefficients $a_0$, $a_1$, $a_2$, and $a_3$ can be
  chosen in $\s{0,1}$, which establishes the first item in the
  proposition. For the second item, notice that, since $\dzt =
  1+\zt^3$, we have $1+\zt^3\equiv 0\pmod{\dzt}$ so that $\zt^3 \equiv
  -1 \equiv 1\pmod{\dzt}$. This, together with the fact that $2\equiv
  0 \pmod{\dzt}$, implies that any $u\in\Zzt / (\dzt)$ can be written
  as $u=a_0+ a_1\zt+a_2\zt^2$ with $a_0,a_1,a_2\in\s{0,1}$. Since we
  have the cyclotomic relation $\zt^4 -\zt^2 + 1 = 0$, we get $\zt^2
  \equiv \zt + 1 \pmod{\dzt}$, from which the second item in the
  proposition then follows.
\end{proof}

The \textbf{quadratic integer ring $\Z[\sqrt{3}]$} is the smallest
subring of $\C$ containing $\Z$ and $\sqrt{3}$. The elements of
$\Z[\sqrt{3}]$ can be characterized as
$\Z[\sqrt{3}]=\s{a+b\sqrt{3}\mid a,b\in\Z}$ and every element of
$\Z[\sqrt{3}]$ can be uniquely written as such a $\Z$-linear
combination of $1$ and $\sqrt{3}$. Because $\sqrt{3} = \z_{12}
+\z_{12}^\dagger$, we have $\Z[\sqrt{3}]\subseteq \Z[\z_{12}]$. Since
$\z_{12}^6 = -1$ and $\z_{12}^4 - \z_{12}^2 + 1=0$, we have
$\z_{12}^\dagger = \zt - \zt^3$. It can then be verified by direct
computation that if $u = a+b\zeta_{12}+c\z_{12}^2+d\z_{12}^3 \in
\Z[\zeta_{12}]$, then
\begin{equation}
\label{eq:norm12}
u^\dagger u = ((a^2 + c^2 + ac) + (b^2 + d^2 + bd)) +
(ab + bc + cd)\sqrt{3}.
\end{equation}
In particular, the Euclidean norm of $u\in \Z[\zeta_{12}]$ belongs
to $\Z[\sqrt{3}]$, i.e., if $u\in \Z[\zeta_{12}]$, then $u^\dagger u
\in \Z[\sqrt{3}]$.

\begin{lemma}
\label{lem:res}
We have:
\begin{itemize}
    \item if $u\in\Zzt$ and $u \equiv_{\dzt} 0$, then $u^\dagger u
      \equiv_2 0$,
    \item if $u\in\Zzt$ and $u\not\equiv_{\dzt} 0 $, then $u^\dagger u
      \equiv_2 1 $ or $u^\dagger u \equiv_2 \sqrt{3}$, and
    \item if $u,v\in\Zzt$ and $u^\dagger u\equiv_2 v^\dagger v $, then
      $u \equiv_2 \zt^m v$ for some integer $m$.
\end{itemize}
\end{lemma}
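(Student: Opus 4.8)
The plan is to leverage the explicit description of the residue rings in \cref{prop:12thquotient} together with the norm formula \cref{eq:norm12}. For the first item, if $u \equiv_{\dzt} 0$ then $u = \dzt w$ for some $w \in \Zzt$, so $u^\dagger u = \dzt^\dagger \dzt \, w^\dagger w$; since $\dzt^\dagger \dzt = (1-i)(1+i) = 2$, we get $u^\dagger u = 2 w^\dagger w \equiv_2 0$. For the second and third items, I would proceed by a finite computation over the quotient. The key observation is that $u^\dagger u$ depends only on the residue of $u$ modulo $2$: indeed, if $u \equiv_2 u'$, then writing $u = u' + 2w$ gives $u^\dagger u = u'^\dagger u' + 2(u'^\dagger w + w^\dagger u') + 4 w^\dagger w$, and since $u^\dagger u, u'^\dagger u' \in \Z[\sqrt 3]$ by \cref{eq:norm12}, the difference lies in $2\Z[\sqrt 3]$, i.e. $u^\dagger u \equiv_2 u'^\dagger u'$ in $\Z[\sqrt 3]/(2)$. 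So it suffices to compute $u^\dagger u \bmod 2$ for each of the $16$ residues $u = a_0 + a_1\zt + a_2\zt^2 + a_3\zt^3$ with $a_j \in \{0,1\}$ listed in \cref{prop:12thquotient}.

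Concretely, I would tabulate, using \cref{eq:norm12} reduced mod $2$, the value $u^\dagger u \equiv_2 (a_0^2 + a_2^2 + a_0 a_2) + (a_1^2 + a_3^2 + a_1 a_3) + (a_0 a_1 + a_1 a_2 + a_2 a_3)\sqrt 3$ over the $16$ choices of $(a_0,a_1,a_2,a_3) \in \{0,1\}^4$. The residue $u$ is divisible by $\dzt$ precisely when $u \equiv_{\dzt} 0$; using the reduction $\zt^3 \equiv 1$, $\zt^2 \equiv \zt + 1 \pmod{\dzt}$ from the proof of \cref{prop:12thquotient}, one checks which of the $16$ residues collapse to $0$ in $\Zzt/(\dzt) = \{0,1,\zt,\zt^2\}$ — these are the $4$ residues mapping to $0$, and the remaining $12$ are the nonzero ones. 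For the $4$ residues with $u \equiv_{\dzt} 0$ the table must give $u^\dagger u \equiv_2 0$ (consistent with the first item), and for the other $12$ the table will show $u^\dagger u \equiv_2 1$ or $u^\dagger u \equiv_2 \sqrt 3$, establishing the second item. Note $\Z[\sqrt 3]/(2) = \{0, 1, \sqrt 3, 1 + \sqrt 3\}$ has four elements, so a priori $1 + \sqrt 3$ could also occur; the content of the second item is that it does not for $\dzt \nmid u$.

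For the third item, I would read off from the same table the fibers of the map $u \mapsto u^\dagger u \bmod 2$ restricted to the $12$ nonzero residues: the claim is that each fiber is a single orbit under multiplication by powers of $\zt$. Since $\zt$ has order $12$ but only its residues mod $2$ matter, I would first determine the order of $\zt$ in $(\Zzt/(2))^\times$ and the size of the orbit $\{\zt^m v \bmod 2 : m \in \Z\}$ for a representative $v$ of each fiber, then verify these orbits exhaust the fibers. Because $\zt^\dagger \zt = 1$, multiplication by $\zt^m$ preserves the norm, so each orbit lies within a single fiber; the substantive direction is that there is nothing more in the fiber. The main obstacle is purely bookkeeping: organizing the $16$-row table cleanly and matching orbit sizes to fiber sizes without error. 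I expect the $12$ nonzero residues to split into a fiber of size $6$ over $1$ (a single $\zt$-orbit, since $\zt$ should act with order $6$ mod $2$ on generic elements) and a fiber of size $6$ over $\sqrt 3$, though the exact orbit structure — and whether $-1 \equiv_2 1$ forces orbits to have size dividing $6$ — is what the computation will pin down.
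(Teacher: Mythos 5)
Your proposal is essentially the paper's own proof: the paper proves all three items by exhibiting exactly the $16$-row table you describe (residues of $u$ modulo $2$, grouped by residue modulo $\dzt$, with $u^\dagger u$ modulo $2$ computed via \cref{eq:norm12}), and your anticipated structure is what that table shows --- $\zt$ has order $6$ in $(\Zzt/(2))^\times$, and the $12$ residues with $u\not\equiv_\dzt 0$ split into the $\zt$-orbit of $1$ (norm $1$) and the $\zt$-orbit of $1+\zt$ (norm $\sqrt 3$). Your preliminary observation that $u^\dagger u \bmod 2$ depends only on $u \bmod 2$ is a point the paper leaves implicit, and is worth making explicit. One caveat: for the third item you restrict to the $12$ nonzero residues, which does not cover the case $u\equiv_\dzt v\equiv_\dzt 0$; there the fiber over $0$ is \emph{not} a single $\zt$-orbit (it is $\{0\}$ together with the orbit of $\dzt$, e.g.\ $u=2$, $v=\dzt$ satisfy $u^\dagger u\equiv_2 v^\dagger v$ but $u\not\equiv_2\zt^m v$), so the third item as literally stated fails there --- a defect shared by the paper's table-based proof, and harmless in practice since \cref{lem:hreduce,lem:12thstep} only ever invoke it for entries not divisible by $\dzt$.
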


\begin{table}
\centering
        \begin{tabular}{c|c|c}
            $u \pmod{\dzt}$ & $u \pmod{2}$ & $u^\dagger u \pmod{2}$
            \\
            \hline
            $0$ & $0$ & $0$ \\
            $0$ & $1+\zt^3$ & $0$
            \\
            $0$ & $1+\zt+\zt^2$ & $0$ \\
            $0$ & $\zt+\zt^2+\zt^3$ & $0$ \\
            \hline
            $1$ & $1$ & $1$ \\
            $1$ & $\zt^3$ & $1$ \\
            $1$ & $\zt+\zt^2 \equiv_2\zt(1+\zt)$ & $\sqrt{3}$ \\
            $1$ & $1+\zt+\zt^2+\zt^3\equiv_2\zt^4(1+\zt)$ & $\sqrt{3}$ \\
            \hline
            $\zt$ & $\zt$ & $1$ \\
            $\zt$ & $1+\zt^2\equiv_2\zt^4$ & $1$ \\
            $\zt$ & $\zt^2+\zt^3 \equiv_2  \zt^2(1+\zt)$ & $\sqrt{3}$ \\
            $\zt$ & $1+\zt+\zt^3 \equiv_2  \zt^5(1+\zt)$ & $\sqrt{3}$ \\
            \hline
            $\zt^2$ & $\zt^2$ & $1$ \\
            $\zt^2$ & $1+\zt$ & $
            \sqrt{3}$ \\
            $\zt^2$ & $1+\zt^2+\zt^3 \equiv_2 \zt^3(1+\zt)$ & $\sqrt{3}$ \\
            $\zt^2$ & $\zt+\zt^3 \equiv_2 \zt^5$ & $1$
        \end{tabular}
\caption{The possible residues for $u\in\Z[\z_{12}]$ modulo $\delta$
  and $2$, as well as those for $u^\dagger u$ modulo
  2. \label{tab:residues}}
\end{table}    

\begin{proof}
    By inspection of \cref{tab:residues}, which lists the possible
    residues of $u\in\Zzt$ modulo $\dzt$ and $2$, as well as the
    possible residues of $u^\dagger u$ modulo 2. The calculations
    leading to the construction of \cref{tab:residues} can be verified
    using the congruences from the proof of \cref{prop:12thquotient},
    in addition to the following facts: $\dzt^\dagger\dzt = 2$,
    $(1+\zt)^\dagger(1+\zt) = 2 + \sqrt{3} \equiv_2 \sqrt{3}$, and
    $\zt^4 \equiv_2 \zt^2+1$.
\end{proof}

\section{Matrices and circuits}
\label{sec:mats}

Let $R$ be a commutative ring. We write $\matrices(R)$ for the
collection of all square matrices with entries in $R$, and
$\matrices_m(R)$ for the ring of $m\times m$ matrices in $R$. When $R$
is a subring of $\mathbb{C}$ that is closed under complex conjugation,
we write $\unitaries(R)$ for the collection of all unitary matrices
with entries in $R$, and $\unitaries_m(R)$ for the group of $m\times
m$ unitary matrices with entries in $R$.

We now introduce certain matrices which will be useful in
\cref{sec:12}. Let $c\in \C$. For $0\leq j \leq m-1$, the
\textbf{one-level operator of type $c$} is the $m\times m$ matrix
$c_{[j]}$ defined as
\[
c_{[j]} = 
\begin{bNiceMatrix}[first-row,first-col]
          & \cdots & j  & \cdots  \\
\vdots & I         & 0 & 0   \\
j          & 0        & c & 0   \\
\vdots & 0        & 0 & I \\
\end{bNiceMatrix}
\]
Similarly, let $M\in\matrices_2(\C)$. For $0\leq j < j' \leq m-1$, the
\textbf{two-level operator of type $M$} is the $m\times m$ matrix
$M_{[j,j']}$ defined as
\[
M_{[j,j']} = 
\begin{bNiceMatrix}[first-row,first-col]
       & \cdots & j      & \cdots & j'      & \cdots \\
\vdots & I      & 0      & 0      & 0       & 0        \\
j      & 0      & M_{1,1} & 0      &  M_{1,2} & 0        \\
\vdots & 0      & 0      & I      &  0      & 0        \\
j'     & 0      & M_{2,1} & 0      &  M_{2,2} & 0        \\
\vdots & 0      & 0      & 0      &   0     & I         \\
\end{bNiceMatrix}
\]
The one-level operator $c_{[j]}$ acts on an $m$-dimensional vector
$\textbf{u}$ by scaling its $j$-th component by $c$ and leaving the
remaining components unchanged. The two-level operator $M_{[j,j']}$
similarly acts as $M$ on the $j$-th and $j'$-th components of
$\textbf{u}$ and leaves the remaining components unchanged. Note that
if $|c|=1$ then $c_{[j]}$ is unitary, and that if
$M\in\unitaries_2(\C)$, then $M_{[j,j']}$ is unitary. 

We close this section by recalling the existence of some well-known
circuit constructions which will be useful in what follows.  As
mentioned in \cref{sec:intro}, when $n$ is divisible by 8, the gate
set $\mathcal{G}_n$ subsumes the Clifford+$T$ gate set. Hence, any
matrix that can be represented by a Clifford+$T$ circuit can also be
represented by a circuit over $\mathcal{G}_n$. As a consequence, the
one-level operators of type $\z_{n}$ and the two-level operators of
type $X$ and $H$ can be represented exactly over the gate set
$\mathcal{G}_n$ using a single ancilla.

\begin{theorem}[Giles \& Selinger]
\label{thm:gstwo}
The one- and two-level operators of type $\z_n$, $X$, and $H$ can each
be exactly represented by a circuit over $\mathcal{G}_n$ using at most
one ancilla.
\end{theorem}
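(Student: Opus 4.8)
The plan is to conjugate each operator by Clifford permutations so as to reduce it to a multiply-controlled single-qubit gate, and then to compile that gate from Toffoli gates together with $T_n$ or $H$ while introducing only one ancilla. I take the dimension to be $2^k$, so the operators act on $k$ qubits and ``one ancilla'' means one extra qubit (clean, i.e.\ starting and ending in $\ket{0}$, in two of the three cases). The gates $X$ and $CX$ generate without ancillas every affine permutation $x\mapsto Ax+b$ of $\mathbb F_2^k$ ($X$ supplies $b$, and $CX$ generates $\mathrm{GL}_k(\mathbb F_2)$ via transvections). For $j\neq j'$ there is such a permutation sending $j$ and $j'$ to $(1,\dots,1,0)$ and $(1,\dots,1,1)$, so $M_{[j,j']}$ is Clifford-conjugate to the gate $C^{k-1}M$ that applies $M$ to the last qubit conditioned on the first $k-1$ qubits being $1$; and a single layer of $X$ gates sends $\zeta_{n,[j]}$ to $C^{k-1}(\diag(1,\zeta_n))$, the gate multiplying by $\zeta_n$ exactly when all $k$ qubits equal $1$. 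It therefore suffices to build $C^{k-1}X$, $C^{k-1}H$, and $C^{k-1}(\diag(1,\zeta_n))$ over $\mathcal G_n$ with one ancilla.

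For $k-1\le 2$ the three gates are among $X$, $CX$, the Toffoli gate, $CH$, and $T_n$, all of which lie in the Clifford$+T$ part of $\mathcal G_n$ and need no ancilla; that $CH$ is Clifford$+T$ follows by controlling each factor of the identity $H=\zeta_8\,(SZ)(HS^\dagger H)(SZ)$, using $\diag(1,\zeta_8)=T$ on the control, $C(SZ)=CS\cdot CZ$, and $C(HS^\dagger H)=(I\otimes H)(CS)^{-1}(I\otimes H)$ together with $CS\in$ Clifford$+T$. So assume $k-1\ge 3$. For $C^{k-1}X$ I would use the standard fact (Barenco et al.) that a $(k-1)$-controlled $X$ on $k+1$ wires can be compiled into $O(k)$ Toffoli gates using one further wire as a \emph{borrowed} ancilla --- arbitrary in state and restored at the end --- which the single (even dirty) ancilla provides.

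For $C^{k-1}(\diag(1,\zeta_n))$ on qubits $q_1,\dots,q_k$ with a clean ancilla $a$, I would: (i) compute $a\leftarrow q_1\wedge\cdots\wedge q_{k-1}$ by the construction of the previous paragraph, with $q_k$ as the borrowed wire; (ii) apply $T_n$ to $q_1$, then a Toffoli with controls $a,q_k$ and target $q_1$, then $T_n^\dagger$ to $q_1$, then the same Toffoli again; (iii) uncompute $a$. With $c=q_1\wedge\cdots\wedge q_k$ one has $a\wedge q_k=c$ and $q_1 c=c$ (products of $0/1$ integers), so the Toffoli sends $q_1$ to $q_1\oplus c=q_1-c$, and the two $T_n^{\pm1}$ contribute net phase $\zeta_n^{\,q_1}\zeta_n^{-(q_1-c)}=\zeta_n^{\,c}$ while every wire, $a$ included, returns to its input. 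For $C^{k-1}H$ the same skeleton works: compute $a\leftarrow q_1\wedge\cdots\wedge q_{k-1}$, apply $CH$ with control $a$ and target $q_k$, then uncompute $a$ (which reads only $q_1,\dots,q_{k-1}$, untouched by $CH$). Every gate used is in $\mathcal G_n$ and only one ancilla appears, as claimed; the underlying detailed circuit identities go back to \cite{Giles2013a}.

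I expect the only non-routine point to be the phase gate. The naive route --- write $q_1\wedge\cdots\wedge q_k$ into the ancilla and apply one $T_n$ --- is blocked, since an exact $k$-controlled $X$ into a lone ancilla has no spare wire to borrow, while substituting a dirty control wire makes the $T_n$ leak an input-dependent phase. Step (ii) above circumvents this: once $a=q_1\wedge\cdots\wedge q_{k-1}$ is in place, the implication $a\Rightarrow q_1$ collapses the leaked phase to a bare $\zeta_n^{\,q_1}$ factor, which the extra unconditional $T_n^{\mp1}$ on $q_1$ exactly cancels; verifying this cancellation, and the analogous bookkeeping for $C^{k-1}H$, is the crux.
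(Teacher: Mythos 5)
Your main construction is sound and, in substance, re-derives what the paper simply outsources: the paper's proof of this theorem is a two-line citation of Section~5 of Giles--Selinger plus the remark that $T$ should be replaced by $T_n$, so your reduction by affine permutations (generated by $X$ and $CX$) to $C^{k-1}X$, $C^{k-1}H$, $C^{k-1}\diag(1,\zeta_n)$, the borrowed-ancilla compilation of the multiply-controlled NOT, and the compute/uncompute skeleton for the controlled Hadamard and phase gates is exactly the kind of argument hidden behind that citation. I checked the two points you flag as the crux: the identity $H=\zeta_8(SZ)(HS^\dagger H)(SZ)$ is correct, so $CH$ is Clifford+$T$ without ancillas; and in your step (ii) the bookkeeping $\zeta_n^{q_1}\zeta_n^{-(q_1\oplus c)}=\zeta_n^{c}$ (using $c\le q_1$) is right, the two Toffolis restore $q_1$, and since $q_1,\dots,q_{k-1}$ hold their original values again before the uncomputation, the ancilla is correctly returned to $\ket{0}$.

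The one genuine flaw is your base case. For $k-1\le 2$ the three reduced gates are not all ``among $X$, $CX$, Toffoli, $CH$, $T_n$'': for $k=2$ the phase gate is the singly controlled $\diag(1,\zeta_n)$, i.e.\ $\diag(1,1,1,\zeta_n)$, and for $k=3$ you need $C^2\diag(1,\zeta_n)$ and $C^{2}H$; none of these appear in your list, and the controlled-$\zeta_n$ phase gates are not Clifford+$T$ at all when $n>8$, so the assertion that these cases need no ancilla because they lie in the listed set does not stand as written. The repair is immediate and uses nothing beyond what you already set up: for $k=2,3$ compute $a\leftarrow q_1$ with a $CX$, or $a\leftarrow q_1\wedge q_2$ with a Toffoli (no borrowed wire is needed here), then run your steps (ii)--(iii) verbatim for the phase gate, respectively apply $CH$ from $a$ to $q_k$ and uncompute for the Hadamard case; this stays within $\mathcal{G}_n$ and one clean ancilla. (A cosmetic slip in the same sentence: $T_n$ lies in $\mathcal{G}_n$ but not in its Clifford+$T$ part unless $n=8$.) With that boundary case fixed, your proof is correct and follows the same strategy as the construction the paper cites.
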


\begin{proof}
Circuit constructions for the one- and two-level operators of type
$\z_8$, $X$, and $H$ using a single ancilla can be found in
\cite[Section~5]{Giles2013a}. To obtain a circuit for the one-level
operator of type $\z_n$, it suffices to replace $T$ with $T_n$ where
appropriate.
\end{proof}

\section{Determinants}
\label{sec:dets}

The \textbf{determinant} is an important function on matrices. We will
be interested in the determinant of certain block matrices.

Let $R$ be a commutative ring and let $M$ be an $m\times m$ matrix
with entries in $R$. The \textbf{determinant of $M$ with respect to
  $R$}, denoted by $\det\nolimits_R(M)$, is defined as
\[
\det\nolimits_R(M) = \sum_{\sigma\in S_m} \sgn(\sigma) \prod_{j=1}^m
M_{j,\sigma(j)},
\]
where $S_m$ is the symmetric group of degree $m$ and $\sgn(\sigma)$ is
the sign of the permutation $\sigma\in S_m$.  If $R$ is a subring of
some ring $R'$ and $M$ is a matrix with entries in $R$, we have
$\det\nolimits_R(M) = \det\nolimits_{R'}(M)$. For simplicity, when the
ring $R$ in which the determinant is to be computed is clear from
context, we sometimes write $\det(M)$ rather than
$\det\nolimits_R(M)$. The determinant is multiplicative, in the sense
that for matrices $M$ and $M'$ over $R$ of compatible dimensions, we
have $\det\nolimits_R(MM') =
\det\nolimits_R(M)\det\nolimits_R(M')$. Moreover, if $M$ is a complex
unitary matrix, then $|\det\nolimits_\C(M)|=1$.

If $\phi:R\to R'$ is a ring homomorphism, then the entrywise
application of $\phi$ is a ring homomorphism $\matrices_m(R) \to
\matrices_m(R')$. By a slight abuse of notation, we use the symbol
$\phi$ to denote both the homomorphism $R\to R'$ and its entrywise
extension $\matrices(R) \to \matrices(R')$. Now consider a matrix $M$
over $R$. Since $\phi$ is a ring homomorphism and the determinant is a
polynomial in the matrix entries, we have
\begin{equation}
\label{eq:dethom}
\phi(\det\nolimits_R(M)) 
= \phi \left(\sum_{\sigma \in S_m} \sgn(\sigma) \prod_{j=1}^m M_{j,\sigma(j)}\right) 
= \sum_{\sigma \in S_m} \sgn(\sigma) \prod_{j=1}^m \phi\left( M\right)_{j,\sigma(j)}  
= \det\nolimits_{R'} (\phi(M)).
\end{equation}

Let $R$ be a commutative ring and let $R'$ be a commutative subring of
$\matrices_m(R)$. Now consider a matrix $M$ in
$\matrices_{m'}(R')$. The matrix $M$ is an $m'\times m'$ block matrix
whose blocks are $m\times m$ matrices over $R$. We can therefore
compute the determinant of $M$ with respect to $R'$, and then the
determinant of the resulting matrix with respect to
$R$. Alternatively, we can ``open'' the blocks and think of $M$ as an
$mm'\times mm'$ matrix over $R$ to directly compute the determinant of
$M$ with respect to $R$. The following theorem, whose proof can be
found in \cite[Theorem~1]{Silvester2000}, states that these two
quantities are equal.

\begin{theorem}[Silvester]
\label{thm:detblock}
    Let $R$ be a commutative ring, let $R'$ be a commutative subring
    of $\matrices_m(R)$, and let $M \in \matrices_{m'}(R')$. Then
    \[
    \det\nolimits_R M = \det\nolimits_R(\det\nolimits_{R'} M).
    \]
\end{theorem}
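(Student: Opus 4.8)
The plan is to prove the identity by induction on $m'$, with the real work going into a reduction to the case where the pivot block $M_{1,1}$ is invertible. Before starting, I would make one harmless normalization: replacing $R'$ by the subring of $\matrices_m(R)$ generated by $R'$ together with the central scalar matrices $\{rI_m : r\in R\}$ keeps the ring commutative (the scalars are central), keeps $M$ a matrix over it, and changes neither $\det\nolimits_R M$ (which only sees the opened-up $mm'\times mm'$ matrix) nor $\det\nolimits_{R'}M$ (a determinant over a subring agrees with the determinant over the ambient ring). So I may assume $rI_m\in R'$ for every $r\in R$, i.e.\ that $R'$ is a commutative $R$-algebra. Throughout I would use the multiplicativity of $\det\nolimits_R$ on $\matrices_m(R)$, the identity \cref{eq:dethom}, and the fact that ``opening up the blocks'' is a ring isomorphism $\matrices_{m'}(\matrices_m(R))\xrightarrow{\ \sim\ }\matrices_{mm'}(R)$. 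The base case $m'=1$ is immediate, since then $\det\nolimits_{R'}M=M_{1,1}$ and both sides equal $\det\nolimits_R(M_{1,1})$.

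For the inductive step, I would first treat the case where $a:=M_{1,1}$ is a unit of $R'$. Block Gaussian elimination with pivot $a$, carried out at the level of the $m'\times m'$ matrix over $R'$, gives $\det\nolimits_{R'}M=a\cdot\det\nolimits_{R'}(M/a)$, where $M/a\in\matrices_{m'-1}(R')$ is the Schur complement of $a$ in $M$; applying $\det\nolimits_R$ and using its multiplicativity yields $\det\nolimits_R(\det\nolimits_{R'}M)=\det\nolimits_R(a)\cdot\det\nolimits_R(\det\nolimits_{R'}(M/a))$. Since $a$ is a unit of $R'$ it is invertible in $\matrices_m(R)$, so the same elimination performed on the opened-up $mm'\times mm'$ matrix gives $\det\nolimits_R M=\det\nolimits_R(a)\cdot\det\nolimits_R\big((M/a)_{\mathrm{opened}}\big)$; here one uses that, opening up being a ring isomorphism, the Schur complement of the opened-up matrix is the opened-up Schur complement. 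The inductive hypothesis applied to $M/a$ equates the two right-hand sides, completing this case.

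It remains to remove the hypothesis that $M_{1,1}$ is a unit, and this is the step I expect to be the main obstacle: $R'$ may fail to contain an inverse of $M_{1,1}$ even when $\det\nolimits_R(M_{1,1})$ is a unit of $R$. To handle it I would introduce an indeterminate $t$ and replace $M$ by $M+tE$, where $E\in\matrices_{m'}(R')$ has $I_m$ in its $(1,1)$-block and zeros elsewhere, so the new pivot block is $a+tI_m$. Its determinant $p(t):=\det\nolimits_R(a+tI_m)\in R[t]$ is monic of degree $m$ in $t$, hence a non-zero-divisor of $R[t]$, so $R[t]$ embeds into the localization $R[t][p(t)^{-1}]$; over that ring, $R'[t][p(t)^{-1}]$ is a commutative subring of $\matrices_m\big(R[t][p(t)^{-1}]\big)$ containing $p(t)^{-1}I_m$, so $a+tI_m$ is a unit there, with inverse $p(t)^{-1}\mathrm{adj}(a+tI_m)$ (and $\mathrm{adj}(a+tI_m)\in R'[t]$ thanks to the normalization above). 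The previous case then applies to $M+tE$ over this localization; since both sides of the resulting identity lie in $R[t]$ and $R[t]$ injects into $R[t][p(t)^{-1}]$, the identity already holds in $R[t]$, and specializing $t\mapsto 0$ recovers the desired equation for $M$. Apart from this reduction, the verifications required — the Schur-complement determinant identities over commutative rings, the compatibility of opening up with matrix arithmetic, and the bookkeeping that both sides lie in $R[t]$ — are routine.
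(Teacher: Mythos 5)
Your argument is correct, but note that the paper does not prove \cref{thm:detblock} at all: it simply cites \cite[Theorem~1]{Silvester2000}. Your proposal is essentially a self-contained reconstruction of Silvester's own argument --- induction on the number of blocks, block elimination via the Schur complement when the pivot block is invertible, and a polynomial perturbation $a \mapsto a+tI_m$ to remove the invertibility hypothesis --- adapted cleanly to an arbitrary commutative base ring by localizing at the monic (hence non-zero-divisor) polynomial $p(t)=\det\nolimits_R(a+tI_m)$ rather than passing to a field of fractions. The one step you should make explicit is why $(a+tI_m)^{-1}$ lies in the \emph{commutative} subring $R'[t][p(t)^{-1}]$ and not merely in $\matrices_m\bigl(R[t][p(t)^{-1}]\bigr)$: your parenthetical appeal to the normalization only supplies the scalar matrices, and you additionally need the Cayley--Hamilton consequence that $\mathrm{adj}(A)$ is a polynomial in $A$ with coefficients given by the characteristic polynomial of $A$, so that $\mathrm{adj}(a+tI_m)$ lies in the $R[t]$-algebra generated by $a$. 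Without that fact the perturbed pivot would be invertible as a matrix but not necessarily a unit of the commutative subring, and the elimination at the level of $\matrices_{m'}(R')$ would not go through. With that point spelled out, the proof is complete and is, if anything, slightly more general in presentation than the cited source, since it never assumes $R$ has a well-behaved total quotient ring.
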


It will be convenient for us to consider a variant of
\cref{thm:detblock} where $R'$ is not a subring of $\matrices_m(R)$
but simply related to one.

\begin{corollary}
\label{cor:detrel}
Let $R$ and $R'$ be two commutative rings, let $\phi : R' \to
\matrices_m(R)$ be a ring homomorphism, and let $M\in
\matrices_{m'}(R')$. Then
\[
\det\nolimits_R(\phi(M)) = \det\nolimits_R(\phi(\det\nolimits_{R'}(M))).
\]
\end{corollary}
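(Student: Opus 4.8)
The plan is to deduce this from Silvester's theorem (\cref{thm:detblock}) by taking the commutative subring of $\matrices_m(R)$ appearing there to be the image of $\phi$. Write $S$ for $\phi(R')\subseteq\matrices_m(R)$. Since $R'$ is commutative and $\phi$ is a ring homomorphism, $S$ is a commutative subring of $\matrices_m(R)$ (it contains $\phi(1_{R'})=I_m$), and we may regard $\phi$ as a ring homomorphism $R'\to S$.

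First I would apply \cref{eq:dethom} to this homomorphism $R'\to S$ and the matrix $M\in\matrices_{m'}(R')$, obtaining
\[
\phi(\det\nolimits_{R'}(M)) = \det\nolimits_{S}(\phi(M)),
\]
an equality of elements of $S$, and hence of elements of $\matrices_m(R)$. Next I would apply \cref{thm:detblock} with the commutative ring $R$, the commutative subring $S$ of $\matrices_m(R)$, and the matrix $\phi(M)\in\matrices_{m'}(S)$, which yields
\[
\det\nolimits_R(\phi(M)) = \det\nolimits_R(\det\nolimits_{S}(\phi(M))),
\]
where on the left-hand side $\phi(M)$ is interpreted as an $mm'\times mm'$ matrix over $R$ by opening up its $m\times m$ blocks. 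Substituting the first display into the right-hand side of the second gives $\det\nolimits_R(\phi(M)) = \det\nolimits_R(\phi(\det\nolimits_{R'}(M)))$, which is the claim.

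I do not expect a serious obstacle here; the only thing to watch is notational bookkeeping. One must distinguish the ring homomorphism $\phi\colon R'\to S$ from its entrywise extension $\matrices_{m'}(R')\to\matrices_{m'}(S)$ when invoking \cref{eq:dethom}, and one must note that in $\det\nolimits_R(\phi(M))$ the matrix $\phi(M)$ is precisely the opened $mm'\times mm'$ matrix over $R$, so that this is exactly the quantity to which \cref{thm:detblock} applies. It is also worth remarking that $\phi$ need not be injective for the argument to go through: in all cases $S=\phi(R')$ is a commutative subring of $\matrices_m(R)$, so both ingredients apply verbatim.
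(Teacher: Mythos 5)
Your argument is correct and is essentially the paper's own proof: the paper likewise sets $Q=\phi[R']$, notes it is a commutative subring of $\matrices_m(R)$, and combines \cref{thm:detblock} with \cref{eq:dethom} in exactly the way you describe. Your version merely spells out the bookkeeping more explicitly.
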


\begin{proof}
    Let $Q = \phi[R']$ be the direct image of $R'$ under $\phi$. Then
    $Q$ is a commutative subring of $\matrices_m(R)$, since $R'$ is
    commutative. Hence, by \cref{thm:detblock} and \cref{eq:dethom},
    \[
    \det\nolimits_R(\phi(M)) = \det\nolimits_R(\det\nolimits_{Q}(\phi(M)))
    = \det\nolimits_R (\phi(\det\nolimits_{R'}(M))),
    \]
    as desired.
\end{proof}

\section{Catalytic embeddings}
\label{sec:catemb}

We now introduce \textbf{catalytic embeddings}
\cite{catemb,qubitcyclo,qutritcyclo}. Let $\mathcal{U}$ and
$\mathcal{V}$ be two sets of unitary matrices. An
$m$-\textbf{dimensional catalytic embedding} from $\mathcal{U}$ into
$\mathcal{V}$ is a pair $(\phi, \textbf{c})$ where $\phi: \mathcal{U}
\rightarrow \mathcal{V}$ is a function and $\textbf{c} \in \C^m$ is a
unit vector such that
    \begin{enumerate}
        \item If $U \in \mathcal{U}$ has dimension $d$, then $\phi(U)
          \in \mathcal{V}$ has dimension $md$, and
        \item For any $\textbf{u} \in \C^{d}$,
          $\phi(U)(\textbf{u}\otimes\textbf{c}) =
          (U\textbf{u})\otimes\textbf{c}.$
    \end{enumerate}
We often write $(\phi, \textbf{c}): \mathcal{U} \rightarrow
\mathcal{V}$ to indicate that $(\phi,\textbf{c})$ is a catalytic
embedding from $\mathcal{U}$ to $\mathcal{V}$. The composition of an
$m$-dimensional catalytic embedding $(\phi ,
\textbf{c}):\mathcal{U}\to\mathcal{V}$ and an $m'$-dimensional
catalytic embedding $(\phi', \textbf{c}'):\mathcal{V}\to\mathcal{W}$
is the $mm'$-dimensional catalytic embedding $(\phi'\circ \phi,
\textbf{c}\otimes \textbf{c}'):\mathcal{U}\to\mathcal{W}$. The
catalytic embedding $(\mathrm{id}_\mathcal{U},
1):\mathcal{U}\to\mathcal{U}$ acts as the identity for this notion of
composition.

We now discuss two specific catalytic embeddings which will be
important in the rest of this paper. Let $k\geq 2$ and define the
matrix $\Lambda_k$ and the vector $\textbf{c}_k$ by
    \[
    \Lambda_{k} = \begin{bmatrix}
        0 & 1 \\
        \zeta_{2^{k-1}} & 0
    \end{bmatrix} 
    \qquad
    \mbox{and}
    \qquad
    \textbf{c}_k = \frac{1}{\sqrt{2}}\begin{bmatrix}
        1\\
        \zeta_{2^k}
    \end{bmatrix}.
    \]
We can use $\Lambda_k$ and $\textbf{c}_k$ to define a catalytic
embedding, following \cite{catemb,qubitcyclo}. Consider a matrix
$M\in\matrices(\mathcal{R}_{2^k})$. Then $M$ can be uniquely written
as $M=A+B\z_{2^k}$, for some
$A,B\in\matrices(\mathcal{R}_{2^{k-1}})$, so that we can construct
the matrix $A\otimes I_2 + B\otimes
\Lambda_k\in\matrices(\mathcal{R}_{2^{k-1}})$. It can be shown that
the assignment
\begin{equation}
\label{eq:phik}
A+B\zeta_{2^k} \longmapsto A \otimes I_2 + B \otimes \Lambda_{k}
\end{equation}
defines, for every $m$, a ring homomorphism
$\phi_k:\matrices_m(\mathcal{R}_{2^{k}}) \to
\matrices_{2m}(\mathcal{R}_{2^{k-1}})$. Moreover, we have
$\phi_k(M^\dagger) = \phi_k(M)^\dagger$ for every $M$, so that
$\phi_k$ restricts to a group homomorphism
$\unitaries_m(\mathcal{R}_{2^k})\to\unitaries_{2m}(\mathcal{R}_{2^{k-1}})$. Now
let $\textbf{u}$ be an arbitrary vector. Then
\begin{align*}
\phi_k(U) (\textbf{u}\otimes\textbf{c}_k) &= (A\otimes I + B \otimes \Lambda_{k}) (\textbf{u}\otimes\textbf{c}_k)\\
&= (A\otimes I) (\textbf{u}\otimes\textbf{c}_k)+ (B \otimes \Lambda_{k})(\textbf{u}\otimes\textbf{c}_k)\\
&= A\textbf{u}\otimes I \textbf{c}_k+ B\textbf{u} \otimes \Lambda_{k}\textbf{c}_k \\
&= A\textbf{u}\otimes \textbf{c}_k+ B\textbf{u} \otimes \z_{2^k}\textbf{c}_k \\ 
&= A\textbf{u}\otimes \textbf{c}_k+ B\z_{2^k} \textbf{u} \otimes \textbf{c}_k \\ 
&= (A\textbf{u}+ B\z_{2^k} \textbf{u})\otimes\textbf{c}_k \\ 
&=(U\textbf{u})\otimes \textbf{c}_k. 
\end{align*}
The pair $(\phi_k, \textbf{c}_k)$ is therefore a catalytic
embedding. For future reference, we record this fact in the
proposition below.

\begin{proposition}
    \label{prop:cyclocatemb1}
    Let $k \geq 2$. Then $(\phi_{k}, \textbf{c}_k)$ is a
    $2$-dimensional catalytic embedding from
    $\unitaries(\mathcal{R}_{2^k})$ to
    $\unitaries(\mathcal{R}_{2^{k-1}})$.
\end{proposition}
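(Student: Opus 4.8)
The plan is to verify the two defining conditions of a catalytic embedding for the pair $(\phi_k, \textbf{c}_k)$, having first confirmed that $\phi_k$ is a well-defined map of the required type. The preamble to the proposition already sketches most of the argument, so the proof is essentially an assembly of facts that have been established or can be checked by direct computation. First I would note that $\textbf{c}_k$ is a unit vector, since $\|\textbf{c}_k\|^2 = \tfrac{1}{2}(|1|^2 + |\zeta_{2^k}|^2) = 1$. Next I would record that every $M \in \matrices_m(\mathcal{R}_{2^k})$ decomposes uniquely as $M = A + B\zeta_{2^k}$ with $A, B \in \matrices_m(\mathcal{R}_{2^{k-1}})$, which follows entrywise from the uniqueness statement for $\mathcal{R}_n$ around \cref{eq:cycloring2} (using that $\varphi(2^k) = 2\varphi(2^{k-1})$ for $k \geq 2$); this makes the assignment in \cref{eq:phik} unambiguous.

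The substantive step is to check that $\phi_k$ is a ring homomorphism $\matrices_m(\mathcal{R}_{2^k}) \to \matrices_{2m}(\mathcal{R}_{2^{k-1}})$ that is compatible with the dagger. Additivity is immediate from the uniqueness of the decomposition. For multiplicativity, I would write $M = A + B\zeta_{2^k}$ and $N = C + D\zeta_{2^k}$, compute $MN = AC + (AD + BC)\zeta_{2^k} + BD\,\zeta_{2^k}^2$, and use $\zeta_{2^k}^2 = \zeta_{2^{k-1}}$ to re-express this in the normal form $MN = (AC + BD\,\zeta_{2^{k-1}}) + (AD + BC)\zeta_{2^k}$; then I would expand $\phi_k(M)\phi_k(N) = (A\otimes I + B\otimes\Lambda_k)(C\otimes I + D\otimes\Lambda_k)$, use $\Lambda_k^2 = \zeta_{2^{k-1}} I_2$, and match the two sides. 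The identity $M = I_m$ maps to $I_{2m}$, so units are preserved. For the dagger, the key computation is $\Lambda_k^\dagger = \zeta_{2^{k-1}}^{-1}\Lambda_k$ (equivalently, since $\zeta_{2^k}^\dagger = \zeta_{2^k}^{-1} = -\zeta_{2^k}^{2^{k-1}-1}$, one re-normalizes $M^\dagger = A^\dagger + B^\dagger \zeta_{2^k}^\dagger$ into standard form and checks it agrees with $\phi_k(M)^\dagger$). Consequently $\phi_k$ carries unitaries to unitaries, giving a group homomorphism $\unitaries_m(\mathcal{R}_{2^k}) \to \unitaries_{2m}(\mathcal{R}_{2^{k-1}})$, which establishes condition~1 (dimensions double).

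Finally, condition~2 is exactly the chain of equalities displayed just before the proposition: for any $\textbf{u}$ and any $U = A + B\zeta_{2^k} \in \unitaries(\mathcal{R}_{2^k})$, expanding $\phi_k(U)(\textbf{u}\otimes\textbf{c}_k)$ and using $\Lambda_k \textbf{c}_k = \zeta_{2^k}\textbf{c}_k$ (a one-line check from the definitions of $\Lambda_k$ and $\textbf{c}_k$) yields $(A\textbf{u} + B\zeta_{2^k}\textbf{u})\otimes\textbf{c}_k = (U\textbf{u})\otimes\textbf{c}_k$. I expect the main obstacle, such as it is, to be purely bookkeeping: carefully tracking the tensor-factor conventions so that the $\Lambda_k$ matrices multiply in the right order, and making sure the reduction of $\zeta_{2^k}^2$ and $\zeta_{2^k}^\dagger$ to standard form is done consistently when verifying multiplicativity and the dagger-compatibility. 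None of these steps is deep, but the multiplicativity check is the one place where an error in the $\Lambda_k$ algebra would go unnoticed, so that is where I would be most careful.
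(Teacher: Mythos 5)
Your proposal is correct and follows essentially the same route as the paper, whose ``proof'' is the discussion immediately preceding the proposition: unique decomposition $M=A+B\zeta_{2^k}$, the (asserted) ring-homomorphism and dagger-compatibility of $\phi_k$, and the displayed chain of equalities using $\Lambda_k\textbf{c}_k=\zeta_{2^k}\textbf{c}_k$. You merely fill in the details the paper leaves implicit (multiplicativity via $\Lambda_k^2=\zeta_{2^{k-1}}I_2$ and the dagger check via $\Lambda_k^\dagger=\zeta_{2^{k-1}}^{-1}\Lambda_k$), and these computations are correct.
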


By identifying $\matrices_1(\mathcal{R}_{2^k})$ with
$\mathcal{R}_{2^k}$, we can think of the function $\phi_k$, when
restricted to $\mathcal{R}_{2^k}$ as a ring homomorphism
$\mathcal{R}_{2^k} \to \matrices_2(\mathcal{R}_{2^{k-1}})$. The
function $\phi_k$ as defined by \cref{eq:phik} is then the entrywise
extension of $\phi_k$ from $\mathcal{R}_{2^k}$ to
$\matrices(\mathcal{R}_{2^k})$. We can therefore use \cref{cor:detrel}
to get, for $U\in\unitaries(\mathcal{R}_{2^k})$, an expression for the
determinant of $\phi_k(U)$.

\begin{corollary}
\label{cor:detblock16}
    Let $k \geq 2$ and let $U \in \unitaries_n(\mathcal{R}_{2^k})$. Then we have 
    \[
    \det\nolimits_{\mathcal{R}_{2^{k-1}}}(\phi_{k}(U)) = \det\nolimits_{\mathcal{R}_{2^{k-1}}}(\phi_k(\det\nolimits_{\mathcal{R}_{2^{k}}}(U))).
    \]
\end{corollary}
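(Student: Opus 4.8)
The plan is to obtain this as an immediate instance of \cref{cor:detrel}. Recall from the discussion following \cref{prop:cyclocatemb1} that, upon identifying $\matrices_1(\mathcal{R}_{2^k})$ with $\mathcal{R}_{2^k}$, the map $\phi_k$ restricts to a ring homomorphism $\mathcal{R}_{2^k}\to\matrices_2(\mathcal{R}_{2^{k-1}})$, namely $a+b\zeta_{2^k}\mapsto a\otimes I_2 + b\otimes\Lambda_k$, and that the map $\phi_k$ on $\matrices_m(\mathcal{R}_{2^k})$ defined by \cref{eq:phik} is precisely the entrywise extension of this scalar homomorphism. This is exactly the situation handled by \cref{cor:detrel}.

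Concretely, I would instantiate \cref{cor:detrel} with $R=\mathcal{R}_{2^{k-1}}$, $R'=\mathcal{R}_{2^k}$, the ring homomorphism $\phi=\phi_k\colon\mathcal{R}_{2^k}\to\matrices_2(\mathcal{R}_{2^{k-1}})$ (so $m=2$), $m'=n$, and $M=U$. Here I use that $\unitaries_n(\mathcal{R}_{2^k})\subseteq\matrices_n(\mathcal{R}_{2^k})$, so that $U$ is an admissible choice of $M$, and that the determinant identities involved make no use of unitarity. Both $\mathcal{R}_{2^{k-1}}$ and $\mathcal{R}_{2^k}$ are commutative, being subrings of $\C$, so all hypotheses of \cref{cor:detrel} are satisfied, and its conclusion specializes to
\[
\det\nolimits_{\mathcal{R}_{2^{k-1}}}(\phi_k(U)) = \det\nolimits_{\mathcal{R}_{2^{k-1}}}(\phi_k(\det\nolimits_{\mathcal{R}_{2^k}}(U))),
\]
which is the asserted identity.

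There is essentially no genuine obstacle here; the only thing to keep straight is which incarnation of $\phi_k$ occurs at each position. On the left-hand side, and inside the outer determinant on the right-hand side, $\phi_k$ is the entrywise extension to matrices, whereas the innermost occurrence is applied to the scalar $\det\nolimits_{\mathcal{R}_{2^k}}(U)\in\mathcal{R}_{2^k}$ and is therefore the original scalar homomorphism. This is consistent with the convention, recalled in \cref{sec:dets}, of denoting a ring homomorphism and its entrywise extension by the same symbol, so no separate bookkeeping argument is required beyond citing \cref{cor:detrel}.
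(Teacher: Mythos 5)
Your proof is correct and is exactly the argument the paper intends: the corollary is obtained by instantiating \cref{cor:detrel} with $R=\mathcal{R}_{2^{k-1}}$, $R'=\mathcal{R}_{2^k}$, and $\phi=\phi_k$ viewed as the ring homomorphism $\mathcal{R}_{2^k}\to\matrices_2(\mathcal{R}_{2^{k-1}})$ whose entrywise extension is the map of \cref{eq:phik}. Your remark distinguishing the scalar homomorphism from its entrywise extension matches the paper's own discussion preceding the corollary.
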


In other words, \cref{cor:detblock16} states that, for
$U\in\unitaries(\mathcal{R}_{2^k})$, to compute the determinant of
$\phi_k(U)$ over $\mathcal{R}_{2^{k-1}}$, one can first compute the
determinant $u$ of $U$ over $\mathcal{R}_{2^k}$, and then compute the
determinant of $\phi_k(u)$ over $\mathcal{R}_{2^{k-1}}$.

\begin{remark}
The function $\det\nolimits_{\mathcal{R}_{2^{k-1}}}\circ \phi_k :
\mathcal{R}_{2^k} \to \mathcal{R}_{2^{k-1}}$ is known in number theory
as the \textbf{relative norm} of the field extension $\Q(\zeta_{2^k})
/ \Q(\zeta_{2^{k-1}})$, and is often denoted by
$\mathrm{N}_{\Q(\zeta_{2^k}) /
  \Q(\zeta_{2^{k-1}})}$. \cref{cor:detblock16} therefore states that
the determinant (over $\mathcal{R}_{2^{k-1}}$) of $\phi_k(U)$ is the
relative norm of the determinant (over $\mathcal{R}_{2^{k}}$) of $U$,
i.e., $\det\nolimits_{\mathcal{R}_{2^{k-1}}} \circ \phi_k =
\mathrm{N}_{\Q(\zeta_{2^k}) / \Q(\zeta_{2^{k-1}})} \circ
\det\nolimits_{\mathcal{R}_{2^{k}}}$.
\end{remark}

We close this section by introducing a second catalytic embedding. The
construction is essentially the same as in the definition of
$(\phi_k,\textbf{c}_k)$. We define the matrix $\Gamma_k$ and the
vector $\textbf{d}_k$ by
    \[
    \Gamma_{k} = \begin{bmatrix}
        0 & 1 \\
        \zeta_{3\cdot 2^{k-1}} & 0
    \end{bmatrix} 
    \qquad
    \mbox{and}
    \qquad
    \textbf{d}_k = \frac{1}{\sqrt{2}}\begin{bmatrix}
        1\\
        \zeta_{3\cdot 2^k}
    \end{bmatrix}.
    \]
We then define the function $\psi_{k}: \matrices(\mathcal{R}_{3\cdot
  2^k}) \to \matrices(\mathcal{R}_{3\cdot 2^{k-1}})$ by
    \[
    \psi_k(A+B\zeta_{2^k}) = A \otimes I_2 + B \otimes \Gamma_{k}.
    \]
Reasoning as above, we obtain the proposition below.

\begin{proposition}
    \label{prop:cyclocatemb2}
    Let $k \geq 2$. Then $(\psi_{k}, \textbf{d}_k)$ is a
    $2$-dimensional catalytic embedding from
    $\unitaries(\mathcal{R}_{3\cdot 2^k})$ to
    $\unitaries(\mathcal{R}_{3\cdot 2^{k-1}})$.
\end{proposition}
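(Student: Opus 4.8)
The plan is to mirror, verbatim, the construction of $(\phi_k,\textbf{c}_k)$ and the reasoning of \cref{prop:cyclocatemb1}, replacing $2^k$-th roots of unity by $3\cdot 2^k$-th roots of unity and $\Lambda_k,\textbf{c}_k$ by $\Gamma_k,\textbf{d}_k$ throughout. First I would check that $\psi_k$ is well-defined. By \cref{eq:cycloring2}, every $u\in\mathcal{R}_{3\cdot 2^k}$ is a unique $\D$-linear combination of $1,\zeta_{3\cdot 2^k},\ldots,\zeta_{3\cdot 2^k}^{\varphi(3\cdot 2^k)-1}$; since $\varphi(3\cdot 2^k)=2^k=2\varphi(3\cdot 2^{k-1})$ for $k\geq 2$ and $\zeta_{3\cdot 2^k}^2=\zeta_{3\cdot 2^{k-1}}$, separating even from odd powers shows that $u$ can be written uniquely as $u=a+b\zeta_{3\cdot 2^k}$ with $a,b\in\mathcal{R}_{3\cdot 2^{k-1}}$, and hence every $M\in\matrices_m(\mathcal{R}_{3\cdot 2^k})$ can be written uniquely as $M=A+B\zeta_{3\cdot 2^k}$ with $A,B\in\matrices_m(\mathcal{R}_{3\cdot 2^{k-1}})$. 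This makes $\psi_k$ a well-defined, manifestly additive function $\matrices_m(\mathcal{R}_{3\cdot 2^k})\to\matrices_{2m}(\mathcal{R}_{3\cdot 2^{k-1}})$.

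Next I would verify that $\psi_k$ is a ring homomorphism. The key identity is $\Gamma_k^2=\zeta_{3\cdot 2^{k-1}}I_2$, paralleling $\zeta_{3\cdot 2^k}^2=\zeta_{3\cdot 2^{k-1}}$. Writing $M=A+B\zeta_{3\cdot 2^k}$ and $M'=A'+B'\zeta_{3\cdot 2^k}$, centrality of the roots of unity gives $MM'=(AA'+\zeta_{3\cdot 2^{k-1}}BB')+(AB'+BA')\zeta_{3\cdot 2^k}$, while expanding $\psi_k(M)\psi_k(M')=(A\otimes I_2+B\otimes\Gamma_k)(A'\otimes I_2+B'\otimes\Gamma_k)$ with the mixed-product property of $\otimes$ and the identity $\Gamma_k^2=\zeta_{3\cdot 2^{k-1}}I_2$ yields exactly $\psi_k(MM')$; since $\psi_k(I)=I\otimes I_2=I_{2m}$, $\psi_k$ is a ring homomorphism. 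For dagger-compatibility I would use that $\zeta_{3\cdot 2^k}^{-1}=\zeta_{3\cdot 2^{k-1}}^{-1}\zeta_{3\cdot 2^k}$ with $\zeta_{3\cdot 2^{k-1}}^{-1}\in\mathcal{R}_{3\cdot 2^{k-1}}$, so that $M^\dagger=A^\dagger+(\zeta_{3\cdot 2^{k-1}}^{-1}B^\dagger)\zeta_{3\cdot 2^k}$ stays in the two-term decomposition, together with the $2\times 2$ identity $\Gamma_k^\dagger=\zeta_{3\cdot 2^{k-1}}^{-1}\Gamma_k$; combining these gives $\psi_k(M^\dagger)=\psi_k(M)^\dagger$. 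Consequently, if $U\in\unitaries_m(\mathcal{R}_{3\cdot 2^k})$ then $\psi_k(U)\psi_k(U)^\dagger=\psi_k(UU^\dagger)=\psi_k(I)=I$, so $\psi_k$ restricts to a group homomorphism $\unitaries_m(\mathcal{R}_{3\cdot 2^k})\to\unitaries_{2m}(\mathcal{R}_{3\cdot 2^{k-1}})$, and dimensions double as required.

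Finally I would check the two defining conditions of a catalytic embedding. The dimension condition is the doubling just noted. For the intertwining condition, the crucial fact is $\Gamma_k\textbf{d}_k=\zeta_{3\cdot 2^k}\textbf{d}_k$, an immediate $2\times 2$ computation using $\zeta_{3\cdot 2^k}^2=\zeta_{3\cdot 2^{k-1}}$; then for any $U=A+B\zeta_{3\cdot 2^k}\in\unitaries(\mathcal{R}_{3\cdot 2^k})$ and any vector $\textbf{u}$, the chain of equalities displayed above for $\phi_k$ gives $\psi_k(U)(\textbf{u}\otimes\textbf{d}_k)=A\textbf{u}\otimes\textbf{d}_k+B\textbf{u}\otimes\Gamma_k\textbf{d}_k=(A\textbf{u}+\zeta_{3\cdot 2^k}B\textbf{u})\otimes\textbf{d}_k=(U\textbf{u})\otimes\textbf{d}_k$. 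Since $\textbf{d}_k$ is a unit vector ($\tfrac{1}{2}(1+|\zeta_{3\cdot 2^k}|^2)=1$), this completes the proof. The only step requiring genuine care — though it remains routine — is the ring-homomorphism property, specifically the dagger-compatibility, since that is where one must observe that complex conjugation keeps $A+B\zeta_{3\cdot 2^k}$ inside the two-term decomposition; every other step is a transcription of the $\phi_k$ argument.
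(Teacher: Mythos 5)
Your proposal is correct and matches the paper's approach: the paper proves \cref{prop:cyclocatemb2} simply by noting that the reasoning used for $(\phi_k,\textbf{c}_k)$ in \cref{prop:cyclocatemb1} carries over verbatim with $\zeta_{3\cdot 2^k}$, $\Gamma_k$, and $\textbf{d}_k$ in place of $\zeta_{2^k}$, $\Lambda_k$, and $\textbf{c}_k$, which is exactly the transcription you carry out (your explicit checks of the homomorphism and dagger-compatibility properties just spell out details the paper leaves implicit).
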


An analogue of \cref{cor:detblock16} holds for
$(\psi_k,\textbf{d}_k)$, but we omit it here, since it will not be
useful for our purposes.

\section{Clifford-cyclotomic circuits of degree \texorpdfstring{$2^k$}{2ᵏ}}
\label{sec:16}

We now turn to the exact synthesis of circuits for matrices with
entries in the ring $\mathcal{R}_{2^k}$. We will take advantage of the
following result, which was established in \cite[Lemma~7]{Giles2013a}.

\begin{theorem}[Giles \& Selinger]
\label{thm:gsone}
If $U$ is a $2^m \times 2^m$ matrix with entries in $\mathcal{R}_8$
and $\det(U)=1$, then $U$ can be exactly represented by an
ancilla-free $m$-qubit circuit over $\mathcal{G}_8$.
\end{theorem}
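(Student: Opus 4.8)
The plan is to adapt the classical exact-synthesis argument for Clifford+$T$ circuits: a Gaussian elimination over $\mathcal{R}_8$ to write $U$ as a product of one- and two-level operators, followed by a determinant bookkeeping step to realize those operators without ancillas. Recall that $\mathcal{R}_8 = \Z[1/\sqrt{2},i]$ and that, writing $\delta = 1+i$, every $u\in\mathcal{R}_8$ can be written as $u = u'/\delta^\ell$ with $u'\in\Z[\z_8]$ and $\ell$ minimal; call this minimal $\ell$ the least denominator exponent $\lde(u)$, and extend it to vectors and matrices as in \cref{ssec:ztwelve}.

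First I would reduce $U$ to the identity by multiplying on the left by one-level operators of type $\z_8$ and two-level operators of type $X$ and $H$. Fix the first column $\textbf{v}$ of $U$, a unit vector over $\mathcal{R}_8$. The key number-theoretic input, which is the $\Z[\z_8]$ analogue of \cref{lem:res}, is that if $k=\lde(\textbf{v})\geq 1$ then the numerator vector $\delta^k\textbf{v}$, whose entries have squared absolute values summing to $2^k$, has an even number of entries that are nonzero modulo $\delta$, and these entries can be paired so that applying to each pair a two-level operator of type $H$, suitably preceded by one-level operators of type $\z_8$, strictly decreases $\lde(\textbf{v})$. Iterating, the first column becomes a unit vector over $\Z[i]$, hence a unit times a standard basis vector; a two-level operator of type $X$ and a one-level operator of type $\z_8^2=i$ then turn it into $e_0$. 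By unitarity $U$ now has the form $1\oplus U'$ with $U'\in\unitaries_{2^m-1}(\mathcal{R}_8)$, and repeating the procedure on the remaining columns expresses $U$ as a product $G_1\cdots G_r$ of one-level operators of type $\z_8$ and two-level operators of type $X$ and $H$.

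It remains to realize each $G_j$ over $\mathcal{G}_8$ without an ancilla, and here the hypothesis $\det(U)=1$ is essential. By \cref{thm:gstwo}, each $G_j$ is realizable over $\mathcal{G}_8$ using at most one ancilla; to remove the ancilla one conjugates a two-level operator by a network of $CX$ gates so that it acts on a pair of computational basis states differing in a single bit, turning it into a multiply-controlled $H$ or $X$ gate. The obstruction to compiling these ancilla-free is entirely a matter of determinant: a one-level operator of type $\z_8$ has determinant $\z_8$, while two-level operators of type $X$ and of type $H$ have determinant $-1$, so $\det(U)=1$ forces the product of the determinants of $G_1,\ldots,G_r$ to be $1$. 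This divisibility and parity is precisely what lets the generators be consumed in blocks of determinant $1$: pairs of transpositions combine into even permutations realizable by $CX$ and Toffoli gates, and suitable products of $H$-type two-level operators and of $\z_8$ one-level operators combine into ancilla-free Clifford+$T$ gadgets on $m$ qubits, with no single generator left over to force an extra wire. One checks that $\det(U)=1$ can be maintained as an invariant throughout the elimination so that this grouping is always available.

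I expect the main obstacle to be exactly this last point: proving that, whenever $\det(U)=1$, the list of generators produced by the elimination can be reorganized into determinant-$1$ blocks each admitting a genuinely ancilla-free realization over $\mathcal{G}_8$ on $m$ qubits — equivalently, that the determinant is the only obstruction to ancilla-free realizability. That some such condition is needed is clear: for $m\geq 4$, every gate of $\mathcal{G}_8$ acting on $m$ qubits has determinant $1$, so any ancilla-free circuit over $\mathcal{G}_8$ has determinant $1$. By contrast, the number-theoretic lde-reduction lemma underlying the elimination, while it requires care, is routine and closely parallels the analysis behind \cref{lem:res} and \cref{tab:residues}.
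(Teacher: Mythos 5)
First, note that the paper does not prove this statement at all: it is imported verbatim from Giles and Selinger, cited as Lemma~7 of \cite{Giles2013a}. Your overall plan---column reduction over $\mathcal{R}_8$ into one-level operators of type $\zeta_8$ and two-level operators of type $X$ and $H$, followed by a determinant-aware compilation step---is the same strategy as the original proof, and the elimination half of your sketch is indeed the routine part (it closely parallels \cref{lem:res} through \cref{lem:column} of this paper, with minor slips: after the $\lde$ reduction the column lies in $\Z[\zeta_8]$, not $\Z[i]$, and the leftover unit is a general power of $\zeta_8$, not just of $i$).

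The genuine gap is the step you yourself flag as ``the main obstacle.'' You assert that, because $\det(U)=1$, the generators produced by the elimination ``can be reorganized into determinant-$1$ blocks each admitting a genuinely ancilla-free realization,'' i.e.\ that the determinant is the \emph{only} obstruction to ancilla-free synthesis---but you give neither the grouping argument nor the circuits. Both halves are nontrivial: (a) the generators do not commute, so pairing up determinant $-1$ factors (and absorbing the $\zeta_8$-phases, whose exponents need only sum to $0$ modulo $8$ after accounting for the parity of the $X$- and $H$-type factors) requires an actual normal-form argument, not a parity count; and (b) even granted such det-$1$ blocks, one must exhibit ancilla-free Clifford+$T$ circuits for products such as $H_{[j,j']}X_{[l,l']}$ or $H_{[j,j']}H_{[l,l']}$ on $m$ qubits---\cref{thm:gstwo} only provides circuits using a clean ancilla, and that ancilla cannot simply be declared removable. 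Moreover, your sanity check that every ancilla-free gate has determinant $1$ holds only for $m\geq 4$, so the cases $m\leq 3$ need separate treatment. This compilation step is precisely the content of the cited Lemma~7, so as it stands your proposal is an outline of the known strategy rather than a proof; completing it would amount to reproducing the ancilla-free synthesis argument of \cite{Giles2013a}.
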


Let $U$ be a unitary matrix with entries in $\mathcal{R}_{16}$. Then
the determinant of $U$ over $\mathcal{R}_{16}$ is an element of
$\mathcal{R}_{16}$ of norm 1. The next lemma shows that this
determinant must be a power of $\z_{16}$. The proof is relegated to
\cref{app:lengthone}.

\begin{lemma}
    \label{lem:dyadicunit}
    Let $u \in \mathcal{R}_{16}$ be such that $\abs{u} = 1$. Then $u =
    \zeta_{16}^\ell$ for some $0 \leq \ell \leq 15$.
\end{lemma}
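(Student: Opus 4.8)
The plan is to show that $u$ is in fact a cyclotomic integer all of whose Galois conjugates lie on the unit circle, and then to invoke Kronecker's theorem to conclude that $u$ is a root of unity; since $u$ lies in $\mathbb{Q}(\zeta_{16})$, the only available roots of unity are the powers of $\zeta_{16}$.

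First I would record the identity $u\bar u = |u|^2 = 1$, where the bar denotes complex conjugation. As $\mathbb{Q}(\zeta_{16})$ is closed under complex conjugation, this operation restricts to the automorphism $\zeta_{16}\mapsto\zeta_{16}^{-1}$ of $\mathbb{Q}(\zeta_{16})$. The crucial structural input is that $\mathrm{Gal}(\mathbb{Q}(\zeta_{16})/\mathbb{Q})\cong(\mathbb{Z}/16\mathbb{Z})^\times$ is abelian, so every automorphism $\sigma$ of $\mathbb{Q}(\zeta_{16})$ commutes with complex conjugation. Applying $\sigma$ to $u\bar u = 1$ and using $\sigma(\bar u)=\overline{\sigma(u)}$ then gives $|\sigma(u)|^2 = \sigma(u)\overline{\sigma(u)} = \sigma(u\bar u) = 1$. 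Hence every conjugate of $u$ has absolute value $1$; in particular $N(u):=N_{\mathbb{Q}(\zeta_{16})/\mathbb{Q}}(u)=\prod_\sigma \sigma(u)$ satisfies $|N(u)|=1$.

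Next I would eliminate denominators. Viewing $\mathcal{R}_{16}$ as the free $\mathbb{D}$-module with basis $1,\zeta_{16},\dots,\zeta_{16}^7$ as in \cref{eq:cycloring2}, multiplication by $u$ is $\mathbb{D}$-linear and $N(u)$ is its determinant, so $N(u)\in\mathbb{D}$; since the only elements of $\mathbb{D}$ of absolute value $1$ are $\pm 1$, we get $N(u)=\pm 1$. Now $2$ is totally ramified in $\mathbb{Z}[\zeta_{16}]$ with $(2)=(\lambda)^{8}$, where $\lambda=1-\zeta_{16}$ is the unique prime above $2$, and $N(\lambda)=\Phi_{16}(1)=2$. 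Writing $u=v/\lambda^{j}$ with $v\in\mathbb{Z}[\zeta_{16}]$, $\lambda\nmid v$, and $j\geq 0$ minimal (possible since any power of $2$ in a denominator can be rewritten as a power of $\lambda$ times a unit), we obtain $N(v)=\pm 2^{j}$. If $j\geq 1$ then $2\mid N(v)$ in $\mathbb{Z}$, which forces $\lambda\mid v$ in $\mathbb{Z}[\zeta_{16}]$ because $\lambda$ is the only prime over $2$, contradicting minimality of $j$. Therefore $j=0$ and $u=v\in\mathbb{Z}[\zeta_{16}]$ is an algebraic integer.

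Finally, $u$ is an algebraic integer all of whose conjugates have absolute value $1$, so by Kronecker's theorem $u$ is a root of unity; since the roots of unity in $\mathbb{Q}(\zeta_{16})$ are exactly the $16$th roots of unity, $u=\zeta_{16}^{\ell}$ for some $0\leq\ell\leq 15$. I expect the denominator-elimination step to be the main obstacle, as it is the only place where the arithmetic of $2$ in $\mathbb{Z}[\zeta_{16}]$ (total ramification, uniqueness of the prime above $2$, and its norm) genuinely enters; alternatively, one could run a least-denominator-exponent argument in the style of the one developed for $\mathcal{R}_{12}$ earlier in the paper, adapted to $\mathcal{R}_{16}$ and $\lambda=1-\zeta_{16}$. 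The other steps are short once the abelianness of the cyclotomic Galois group and Kronecker's theorem are available.
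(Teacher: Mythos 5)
Your proof is correct, and it reaches the conclusion by a genuinely different route in its second half. The denominator-clearing step is essentially the same idea as the paper's: both exploit that $2$ is totally ramified in $\Z[\zeta_{16}]$ with $1-\zeta_{16}$ the unique prime above it (the paper derives this via \cref{lem:deltaassoc}), write $u=v/(1-\zeta_{16})^{j}$ with $j$ minimal, and derive a contradiction from $j\geq 1$ --- the paper by applying primality directly to $v^\dagger v = \chi^{2\ell}\cdot(\text{unit})$, you by applying it to $N(v)=\pm 2^{j}$ together with Galois-stability of the prime $(\lambda)$. Where you diverge is the endgame: once $u\in\Z[\zeta_{16}]$, the paper expands $u=\sum_{j=0}^{7}a_j\zeta_{16}^j$ and asserts that the resulting quadratic equation $u^\dagger u=1$ in the $a_j$ forces exactly one $a_j=\pm 1$ (a claim whose verification is left rather terse), whereas you first use the abelianness of $\mathrm{Gal}(\Q(\zeta_{16})/\Q)$ to see that \emph{all} conjugates of $u$ lie on the unit circle and then invoke Kronecker's theorem, together with the fact that the roots of unity in $\Q(\zeta_{16})$ are exactly the $16$th roots of unity. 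Your route imports two standard but nontrivial external facts (commutation of conjugation with the Galois action, and Kronecker's theorem) in exchange for avoiding the explicit coefficient computation entirely; it is also the more portable argument, since it would apply verbatim to any $\mathcal{R}_{2^k}$. One small note: your observation that $N(u)\in\D$ and hence $N(u)=\pm1$ is correct but not strictly needed --- once all conjugates have absolute value $1$ you already know $|N(v)|=|N(\lambda^j u)|=2^j$, which is all the descent requires.
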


In order to save an ancilla in synthesizing circuits over
$\mathcal{G}_{16}$ our strategy is the following. Consider an
$m$-qubit unitary $U$ with entries in $\mathcal{R}_{16}$. By
\cref{lem:dyadicunit}, the determinant of $U$ is a power of
$\zeta_{16}$. Hence, multiplying $U$ by an appropriate power of the
one-level operator of type $\z_{16}$ if needed, we can assume without
loss of generality that $U$ has determinant 1. By
\cref{cor:detblock16}, $\phi_4(U)$ is then an $(m+1)$-qubit unitary of
determinant 1 with entries in $\mathcal{R}_8$ and it can thus be
represented by an $(m+1)$-qubit circuit by \cref{thm:gsone}.

\begin{theorem}
\label{thm:sixteenth}
    A $2^m \times 2^m$ matrix $U$ can be exactly represented by an
    $m$-qubit circuit over $\mathcal{G}_{16}$ if and only if $U \in
    \unitaries_{2^m}(\mathcal{R}_{16})$. Furthermore, a single ancilla
    suffices to synthesize a circuit for $U$.
\end{theorem}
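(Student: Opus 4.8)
The plan is to prove the two directions of the biconditional separately, then establish the ancilla bound. The forward direction (circuit implies unitary over $\mathcal{R}_{16}$) is the easy one: a circuit over $\mathcal{G}_{16}$ composes unitaries whose entries lie in $\mathcal{R}_{16}$ (as verified in \cref{ssec:cyclo}, since $16$ is divisible by $8$), so the represented matrix is unitary with entries in $\mathcal{R}_{16}$; the only subtlety is that ancillas must start and end in a computational basis state, but the usual argument that the computational subspace on which the ancillas are returned to $\ket{0}$ carries a unitary with the required entries goes through unchanged. I would dispatch this in a sentence or two, citing the closure facts already recorded.

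For the converse together with the ancilla count, I would follow the strategy already sketched in the paragraph preceding the theorem. Let $U\in\unitaries_{2^m}(\mathcal{R}_{16})$. First, $\det\nolimits_{\mathcal{R}_{16}}(U)$ is an element of $\mathcal{R}_{16}$ of absolute value $1$, so by \cref{lem:dyadicunit} it equals $\zeta_{16}^\ell$ for some $0\le\ell\le 15$. Multiplying $U$ on the left (say) by $(\zeta_{16}^{-\ell})_{[0]}$, a one-level operator of type $\zeta_{16}$, yields a matrix $U'$ with $\det(U')=1$; note this preserves membership in $\unitaries_{2^m}(\mathcal{R}_{16})$. Next apply $\phi_4$: by \cref{prop:cyclocatemb1}, $\phi_4(U')\in\unitaries_{2^{m+1}}(\mathcal{R}_8)$, and by \cref{cor:detblock16}, $\det\nolimits_{\mathcal{R}_8}(\phi_4(U')) = \det\nolimits_{\mathcal{R}_8}(\phi_4(\det\nolimits_{\mathcal{R}_{16}}(U'))) = \det\nolimits_{\mathcal{R}_8}(\phi_4(1)) = 1$, since $\phi_4$ is a ring homomorphism sending $1$ to the identity. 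Hence \cref{thm:gsone} gives an ancilla-free $(m+1)$-qubit circuit $C$ over $\mathcal{G}_8\subseteq\mathcal{G}_{16}$ representing $\phi_4(U')$. The catalytic embedding property says $\phi_4(U')(\textbf{u}\otimes\textbf{c}_4) = (U'\textbf{u})\otimes\textbf{c}_4$ for all $\textbf{u}$, so $C$, read as an $(m+1)$-qubit circuit whose last qubit is initialized in the state $\textbf{c}_4 = \tfrac{1}{\sqrt2}(1,\zeta_{16})^\top$ and returned in that same state, implements $U'$ on the first $m$ qubits. Since $\textbf{c}_4$ is itself preparable over $\mathcal{G}_{16}$ (it is $T_{16}H\ket{0}$), we can prepend and append the state-preparation circuit for $\textbf{c}_4$ to turn this into a genuine one-ancilla circuit over $\mathcal{G}_{16}$ with the ancilla starting and ending in $\ket 0$. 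Finally, prepend the one-level operator $(\zeta_{16}^{\ell})_{[0]}$, which by \cref{thm:gstwo} is representable over $\mathcal{G}_{16}$ using at most one ancilla — and we can reuse the same ancilla, or absorb it, so the total ancilla count remains $1$. This yields an $m$-qubit circuit over $\mathcal{G}_{16}$ with a single ancilla representing $U$.

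The main obstacle I anticipate is bookkeeping the ancillas so that the final count is exactly $1$ rather than $2$: the one-level correction $(\zeta_{16}^{\pm\ell})_{[0]}$ from \cref{thm:gstwo} nominally wants its own ancilla, and the catalytic embedding wants its own qubit in state $\textbf{c}_4$. I would handle this by noting that $(\zeta_{16}^{\pm\ell})_{[0]}$ is a diagonal one-level phase, and such an operator can in fact be realized with the help of the same ancillary qubit used by the catalytic embedding — or, more cleanly, by observing that it suffices to correct the determinant \emph{after} embedding: one can instead multiply $\phi_4(U)$ by an appropriate power of a one-level $\zeta_{16}$-type operator acting within the $2^{m+1}$-dimensional space to fix its determinant to $1$ before applying \cref{thm:gsone}, again using only the one embedding ancilla. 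Either way the argument is routine once set up correctly; I would spend a sentence making the ancilla-reuse explicit. The minimality of the single ancilla is presumably addressed elsewhere (it follows from the $n=8$ lower bound and a dimension/parity obstruction), so I would either cite it or defer it, as the theorem statement phrases it only as ``a single ancilla suffices.''
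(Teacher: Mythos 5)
Your proposal is correct and follows essentially the same route as the paper: fix the determinant (shown to be $\zeta_{16}^\ell$ by \cref{lem:dyadicunit}) with a one-level $\zeta_{16}$-type phase, apply $\phi_4$, invoke \cref{cor:detblock16} and \cref{thm:gsone} for an ancilla-free $\mathcal{G}_8$ circuit, and conjugate by $I\otimes T_{16}H$ to prepare and uncompute the catalyst $\textbf{c}_4$. The ancilla bookkeeping is resolved exactly as in your first suggestion — the phase correction (realized via \cref{thm:gstwo}) and the catalytic stage are performed sequentially, reusing the single ancilla that is returned to $\ket{0}$ between stages — so no second ancilla is needed.
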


\begin{proof}
    The left-to-right implication follows immediately from the fact
    that all the elements of $\mathcal{G}_{16}$ belong to
    $\unitaries(\mathcal{R}_{16})$. For the right-to-left implication,
    let $U \in \unitaries_{2^m}(\mathcal{R}_{16})$. Since $U$ is
    unitary, $\det\nolimits_{\mathcal{R}_{16}}(U)$ is an element of
    $\mathcal{R}_{16}$ of norm $1$. Thus, by \cref{lem:dyadicunit}, we
    have that $\det\nolimits_{\mathcal{R}_{16}}(U) = \zeta_{16}^\ell$
    for some $0 \leq \ell \leq 15$. Now let $P$ be the
    fully-controlled phase gate
    \[
    P = \diag(1, 1, ..., 1, \zeta_{16}^\ell),
    \]
    and let $V = P^\dagger U$. Note that
    $\det\nolimits_{\mathcal{R}_{16}}(V) =
    \det\nolimits_{\mathcal{R}_{16}}(P^\dagger)
    \det\nolimits_{\mathcal{R}_{16}}(U) = 1$. Now consider the
    catalytic embedding $(\phi_{4}, \textbf{c}_4)$ defined in
    \cref{sec:catemb}. By \cref{cor:detblock16}, we have
    \[
    \det\nolimits_{\mathcal{R}_{8}}(\phi_{4}(V)) = \det\nolimits_{\mathcal{R}_{8}}(\phi_{4}(\det\nolimits_{\mathcal{R}_{16}}(V)))= \det\nolimits_{\mathcal{R}_{8}} (I_2) = 1.
    \]
    Hence, $\phi_{4}(V) \in \unitaries_{2^{m+1}}(\mathcal{R}_{8})$ and
    $\det\nolimits_{\mathcal{R}_{8}}(\phi_{4}(V))=1$ so that, by
    \cref{thm:gsone}, $\phi_{4}(V)$ can be represented by an
    ancilla-free circuit $C$ over $\mathcal{G}_8$. Now let $D$ be the
    circuit $D = (I_m \otimes (T_{16} H))^\dagger \circ C \circ (I_m
    \otimes (T_{16} H))$ over $\mathcal{G}_{16}$.  Then
    $T_{16}H\textbf{e}_0 = \textbf{c}_4$, so that, for an arbitrary
    $\textbf{u}$, we have:
    \begin{align*}
        D(\textbf{u}\otimes\textbf{e}_0) &= (I_m \otimes (T_{16} H))^\dagger \circ C \circ (I_m \otimes (T_{16} H)) (\textbf{u}\otimes\textbf{e}_0) \\
        &= (I_m \otimes (T_{16} H))^\dagger \circ C(\textbf{u}\otimes\textbf{c}_4) \\
        &= (I_m \otimes (T_{16} H))^\dagger \circ \phi_{4}(V)(\textbf{u}\otimes\textbf{c}_4) \\
        &= (I_m \otimes (T_{16} H))^\dagger ((V\textbf{u})\otimes\textbf{c}_4) \\
        &= (V\textbf{u})\otimes\textbf{e}_0.
    \end{align*}
    Hence $D$ exactly represents $V$ over $\mathcal{G}_{16}$ using a
    single ancilla. Moreover, $P$ can also be represented by a circuit
    over $\mathcal{G}_{16}$ using a single ancilla. Indeed, this
    follows from \cref{thm:gstwo}, since $P$ is (a power of) a
    one-level operator of type $\zeta_{16}$. Thus, the unitary $U$ can
    be represented using a single ancilla as well.
\end{proof}

The circuit constructed in the proof of \cref{thm:sixteenth} is
depicted in \cref{fig:sixteenth}. We note that, if $U$ is a unitary
with entries in $\mathcal{R}_{16}$ and the dimension of $U$ is larger
than 4, then an ancilla is necessary to synthesize a matrix for a
$U$. As a result, the construction of \cref{thm:sixteenth} uses the
minimal number of ancillas (in the worst case).

\begin{figure}[t]
\centering
\begin{quantikz}
\lstick{$\textbf{u}$} 
& \gate[wires=2]{~~P~~} \gategroup[2,steps=1,style={dashed,rounded corners,fill=red!20, inner xsep=2pt},background,label style={label position=below,anchor=north,yshift=-0.2cm}]{{}}  
& \push{~P\textbf{u}~}
& \qw \gategroup[2,steps=5,style={dashed,rounded corners,fill=blue!20, inner xsep=2pt},background,label style={label position=below,anchor=north,yshift=-0.2cm}]{{}} 
& \qw 
& \gate[wires=2]{~~C~~} 
& \qw 
& \qw 
& \qw\rstick{$U\textbf{u}$}\\
\lstick{$\textbf{e}_{0}$}  
&
& \push{~\textbf{e}_{0}~}
& \gate{H}\vphantom{T_{16}^\dagger}\hphantom{~} 
& \gate{T_{16}}\vphantom{T_{16}^\dagger} 
& 
& \gate{T_{16}^\dagger}\vphantom{T_{16}^\dagger} 
& \gate{H}\vphantom{T_{16}^\dagger}\hphantom{~} 
& \qw\rstick{$\textbf{e}_{0}$}      
\end{quantikz}   
\caption{The circuit constructed in the proof of
  \cref{thm:sixteenth}.\label{fig:sixteenth}}
\end{figure}
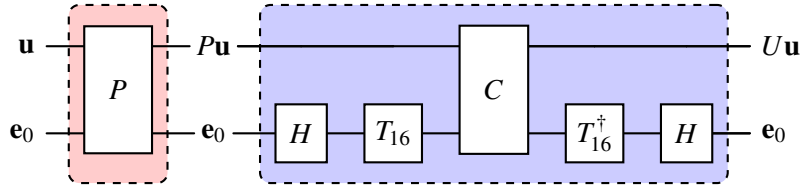

\begin{corollary}
\label{cor:16catemb}
    Let $k \geq 4$. A $2^m \times 2^m$ matrix $U$ can be exactly
    represented by an $m$-qubit circuit over $\mathcal{G}_{2^k}$ if
    and only if $U \in U_{2^m}(\mathcal{R}_{2^k})$. Furthermore, $k-3$
    ancillas suffice to synthesize a circuit for $U$.
\end{corollary}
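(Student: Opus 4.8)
The plan is to argue by induction on $k$, using \cref{thm:sixteenth} as the base case $k = 4$ and the catalytic embedding $(\phi_k, \textbf{c}_k)$ of \cref{prop:cyclocatemb1} to descend from degree $2^k$ to degree $2^{k-1}$ at each inductive step. As in \cref{thm:sixteenth}, the left-to-right implication is immediate: every generator of $\mathcal{G}_{2^k}$ lies in $\unitaries(\mathcal{R}_{2^k})$, so any circuit over $\mathcal{G}_{2^k}$ represents a unitary with entries in $\mathcal{R}_{2^k}$. The real content is the converse, together with the ancilla count.

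For $k = 4$ the statement is precisely \cref{thm:sixteenth}, which synthesizes any $U \in \unitaries_{2^m}(\mathcal{R}_{16})$ over $\mathcal{G}_{16}$ with one ancilla, matching $k - 3 = 1$. Now fix $k \geq 5$ and assume the result for $k - 1$. Given $U \in \unitaries_{2^m}(\mathcal{R}_{2^k})$, \cref{prop:cyclocatemb1} produces $\phi_k(U) \in \unitaries_{2^{m+1}}(\mathcal{R}_{2^{k-1}})$ satisfying $\phi_k(U)(\textbf{u} \otimes \textbf{c}_k) = (U\textbf{u}) \otimes \textbf{c}_k$ for all $\textbf{u}$. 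Applying the induction hypothesis to the $(m+1)$-qubit unitary $\phi_k(U)$ yields a circuit $C$ over $\mathcal{G}_{2^{k-1}}$ that represents $\phi_k(U)$ using $(k-1) - 3 = k - 4$ ancillas, so that $C(\textbf{w} \otimes \textbf{e}_0^{\otimes(k-4)}) = (\phi_k(U)\textbf{w}) \otimes \textbf{e}_0^{\otimes(k-4)}$ for every $\textbf{w} \in \C^{2^{m+1}}$. Since $\zeta_{2^{k-1}} = \zeta_{2^k}^2$, we have $T_{2^{k-1}} = T_{2^k}^2$, so each generator of $\mathcal{G}_{2^{k-1}}$ is realizable over $\mathcal{G}_{2^k}$; hence $C$ may be viewed as a circuit over $\mathcal{G}_{2^k}$.

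To finish, I would reuse the conjugation idea from the proof of \cref{thm:sixteenth}. Since $T_{2^k} H \textbf{e}_0 = \textbf{c}_k$ and $T_{2^k} H$ is a one-qubit circuit over $\mathcal{G}_{2^k}$, put
\[
D = \bigl(I_m \otimes (T_{2^k} H)^\dagger \otimes I_{k-4}\bigr) \circ C \circ \bigl(I_m \otimes (T_{2^k} H) \otimes I_{k-4}\bigr),
\]
where the $(m+1)$-th wire carries the catalyst introduced by $\phi_k$ and the last $k - 4$ wires are the ancillas of $C$. Using the catalytic-embedding identity, a short computation gives $D(\textbf{u} \otimes \textbf{e}_0 \otimes \textbf{e}_0^{\otimes(k-4)}) = (U\textbf{u}) \otimes \textbf{e}_0 \otimes \textbf{e}_0^{\otimes(k-4)}$ for all $\textbf{u}$, so $D$ represents $U$ over $\mathcal{G}_{2^k}$ using $1 + (k-4) = k - 3$ ancillas, completing the induction. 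One could also bypass the induction by composing $\phi_k, \phi_{k-1}, \dots, \phi_5$ into a single $2^{k-4}$-dimensional catalytic embedding $\unitaries(\mathcal{R}_{2^k}) \to \unitaries(\mathcal{R}_{16})$ and invoking \cref{thm:sixteenth} once on the resulting $2^{m+k-4}$-dimensional unitary.

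I expect the only real difficulty to be bookkeeping rather than mathematics. One must check that the gate-set inclusion $\mathcal{G}_{2^{k-1}} \subseteq \mathcal{G}_{2^k}$ at the level of realizable circuits is legitimate, that the ancilla furnished by \cref{thm:sixteenth} and the $k - 4$ ancillas from the induction hypothesis occupy disjoint wires that all begin and end in the state $\textbf{e}_0$, and that the tensor-factor labelling is arranged so that the conjugation by $T_{2^k} H$ acts exactly on the wire holding $\textbf{c}_k$. No new number-theoretic input is needed beyond \cref{lem:dyadicunit}, which is already consumed in the proof of \cref{thm:sixteenth}.
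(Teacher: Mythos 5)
Your proof is correct and follows essentially the same route as the paper, which only sketches this corollary as ``induction with \cref{thm:sixteenth} as the base case and the construction of \cref{fig:sixteenth} in the inductive step'': you fill in exactly that induction, correctly observing that no determinant adjustment is needed in the inductive step (it is already absorbed into the base case) and that $T_{2^{k-1}}=T_{2^k}^2$ makes $\mathcal{G}_{2^{k-1}}$-circuits realizable over $\mathcal{G}_{2^k}$. The ancilla count $1+(k-4)=k-3$ and the conjugation by $T_{2^k}H$ on the catalyst wire match the paper's intended argument.
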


\begin{proof}
    By induction, using \cref{thm:sixteenth} as the base case, and a
    construction similar to the one given in \cref{fig:sixteenth} in
    the inductive step.
\end{proof}

\section{Clifford-cyclotomic circuits of degree \texorpdfstring{$3\cdot 2^k$}{3*2ᵏ}}
\label{sec:12}

We now turn to the exact synthesis of circuits for matrices with
entries in the ring $\mathcal{R}_{3\cdot 2^k}$. Our strategy is to
first prove that every matrix with entries in $\mathcal{R}_{24}$ can
be represented by a circuit over $\mathcal{G}_{24}$, and to then use
an inductive argument like the one used in proving
\cref{cor:16catemb}. In order to establish the result for $n=24$, we
start by showing that every unitary matrix with entries in
$\mathcal{R}_{12}$ can be factored as a product of convenient generators.

\subsection{Generating \texorpdfstring{$\unitaries_n(\Dzt)$}{Un(R[z12]}}
\label{ssec:gen12}

To generate $\unitaries_n(\Dzt)$, we use one-level operators
of type $\zt$ and two-level operators of type $X$ and $H'$, where 
\[
H' = \zeta_8 H = 
\frac{\delta}{2} 
\begin{bmatrix} 
  1 & 1 \\ 
  1 & -1
\end{bmatrix}.
\]
Note that $H'\in \unitaries(\Dzt)$. We follow
\cite{Amy2020,Giles2013a} and first show that a unit vector with
entries in $\Dzt$ can be mapped to a standard basis vector using one-
and two-level operators of type $\zeta_{12}$, $X$, and $H'$. To this
end, we proceed by induction on the least denominator exponent of the
vector.

\begin{lemma}
    \label{lem:12thbaseineq}
    Let $a$ and $b$ be integers, at least one of which is
    nonzero. Then $a^2 + b^2 + ab \geq 1$, and equality is achieved
    exactly when $(a,b)$ is one of $(\pm 1, 0)$, $(0, \pm 1)$, or
    $(\pm 1, \mp 1)$.
\end{lemma}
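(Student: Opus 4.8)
The plan is to treat $q(a,b) = a^2 + ab + b^2$ as a positive definite integral quadratic form and to reduce both claims to the elementary identity
\[
4\,(a^2 + ab + b^2) = (2a+b)^2 + 3b^2.
\]
Up to the identification $\zt^4 = \z_3$, this $q$ is the norm form $a^2 + ac + c^2$ occurring in \cref{eq:norm12}, so the lemma amounts to saying that a nonzero Eisenstein integer has norm at least $1$, with equality only for the six units. We will not need this interpretation, but it explains why the form is relevant in \cref{sec:12}.

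For the inequality, I would observe that the right-hand side of the displayed identity is a sum of two nonnegative integers which vanishes only when $b = 0$ and $2a + b = 0$, i.e.\ only when $a = b = 0$. Hence, when $(a,b) \neq (0,0)$, we get $4\,q(a,b) \geq 1$; since $q(a,b)$ is itself a nonnegative integer, this already forces $q(a,b) \geq 1$.

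For the equality case, I would solve $q(a,b) = 1$ directly via the identity, which turns it into $(2a+b)^2 + 3b^2 = 4$. Since $3b^2 \leq 4$, this forces $b \in \{-1,0,1\}$, and I would then take these three values in turn: $b = 0$ gives $(2a)^2 = 4$, so $a = \pm 1$; $b = 1$ gives $(2a+1)^2 = 1$, so $a \in \{0,-1\}$; and $b = -1$ gives $(2a-1)^2 = 1$, so $a \in \{0,1\}$. Collecting the solutions yields exactly $(\pm1,0)$, $(0,\pm1)$, and $(\pm1,\mp1)$, and a direct substitution confirms that each of these six pairs satisfies $q(a,b) = 1$.

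I do not anticipate any genuine obstacle: the statement is elementary, and completing the square reduces it to a one-variable bound followed by a finite check. The only point meriting slight care is that the list of equality cases be simultaneously exhaustive and free of spurious entries, which the passage to $(2a+b)^2 + 3b^2 = 4$ handles transparently by bounding $|b| \leq 1$ before pinning down $a$.
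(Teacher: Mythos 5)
Your argument is correct, and it takes a genuinely different route from the paper's. The paper proves the inequality by case analysis: it first disposes of the cases where exactly one of $a,b$ is zero, then assumes without loss of generality that $\abs{a}\geq\abs{b}$, deduces $a^2+ab\geq 0$, and concludes $a^2+ab+b^2\geq b^2\geq 1$; the equality cases are read off from when each inequality in the chain is tight. You instead complete the square via $4(a^2+ab+b^2)=(2a+b)^2+3b^2$, which gives positivity in one stroke and converts the equality question into the finite Diophantine problem $(2a+b)^2+3b^2=4$, solved by bounding $\abs{b}\leq 1$. Your approach is more uniform (no case split on which variable vanishes, no WLOG symmetry argument) and arguably makes the exhaustiveness of the six-element solution list more transparent, since it is an explicit enumeration rather than a tightness analysis; the paper's version avoids introducing the factor of $4$ and stays entirely within the original variables. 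Both are complete and elementary; your remark identifying $a^2+ab+b^2$ as the Eisenstein norm form correctly explains why exactly the six units appear, though, as you say, it is not needed for the proof.
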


\begin{proof}
    First notice that if $a\neq 0$ and $b = 0$, then $a^2 + b^2 + ab =
    a^2 \geq 1$, since $a \in \Z$. A similar argument applies when $a
    = 0$ and $b \neq 0$. Equality in the first of these cases happens
    exactly when $a=\pm 1$ and $b=0$, while equality in the second of
    these cases happens exactly when $a=0$ and $b=\pm 1$. Now suppose
    that $a$ and $b$ are both nonzero, and assume, without loss of
    generality, that $\abs{a} \geq \abs{b}$. We then have $a^2 \geq
    \abs{ab}$, so that $a^2 +ab \geq 0$. Hence,
    \[ 
    a^2 + ab + b^2 \geq b^2 \geq 1,
    \]
    since $b\in\Z$. Equality in this case happens exactly when $b^2
    =1$ and $a^2 + ab = 0$ which, in turn, happens exactly when $b=\pm
    1$ and $a=\mp 1$.
\end{proof}

\begin{lemma}
\label{lem:12thbase}
    If $\textbf{u}$ is an $m$-dimensional unit vector with entries in
    $\mathcal{R}_{12}$ and $\lde(\textbf{u})=0$, then, for any $0\leq
    j \leq m-1$, there exists a sequence $G_1, \ldots, G_q$ of one-
    and two-level operators of type $\zeta_{12}$, $X$, and $H'$ such
    that $G_1 \cdots G_q \textbf{u}= \textbf{e}_j$.
\end{lemma}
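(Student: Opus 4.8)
The plan is to prove this by induction on $\lde(\textbf{u})=0$—wait, that's the base case already. Actually, the structure here is that this lemma (Lemma~\ref{lem:12thbase}) is itself the base case ($\lde=0$) of a larger induction on the least denominator exponent, so here I only need to handle an $\textbf{u}$ with $\lde(\textbf{u})=0$, i.e., a unit vector already in $\Zzt^m$. The key observation is that if $\textbf{u}\in\Zzt^m$ is a unit vector, then $\sum_j u_j^\dagger u_j = 1$ in $\Z[\sqrt{3}]$. By \cref{eq:norm12}, each $u_j^\dagger u_j$ lies in $\Z[\sqrt{3}]$, and by \cref{lem:12thbaseineq} together with the structure of \cref{eq:norm12}, each nonzero $u_j^\dagger u_j$ has the form $p + q\sqrt 3$ with $p = (a^2+c^2+ac)+(b^2+d^2+bd)\geq 1$. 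Since the $\sqrt 3$-coefficients must sum to $0$ and the rational parts must sum to $1$ with each contributing a nonnegative integer part that is $\geq 1$ when nonzero, exactly one $u_j$ is nonzero, and for that $j$ we must have $u_j^\dagger u_j = 1$, forcing (again by \cref{lem:12thbaseineq} applied to both quadratic forms and the cross-term) that $u_j$ is one of the units $\pm 1, \pm\zt, \pm\zt^2, \pm\zt^3$, i.e., $u_j=\zt^m$ for some $m$ up to sign, and in fact $u_j = \pm\zt^m$; checking, the units of $\Zzt$ of norm $1$ are exactly $\{\pm\zt^\ell : 0\le \ell\le 5\}$.

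Given this, the construction is immediate: first I would apply a two-level operator of type $X$ (a transposition) to move the unique nonzero entry into position $j$ — this is $X_{[j,j']}$ where $j'$ is the original position (and if $j'=j$, skip this step). Then $\textbf{u}$ has a single nonzero entry $u_j = \pm\zt^\ell$ in slot $j$. Finally I would apply the one-level operator $(\pm\zt^\ell)^{-1}_{[j]} = (\pm\zt^{-\ell})_{[j]}$, which is a product of a one-level operator of type $\zt$ (applied $6-\ell$ or $\ell$ times) and, to handle the sign, a one-level operator of type $-1 = \zt^6$, applied $6$ times as a power of $\zt$. Actually $-1 = \zt^6$ so the sign is absorbed entirely into powers of $\zt_{[j]}$, and no $H'$ is needed for this base case. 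So $G_1\cdots G_q\textbf{u} = \textbf{e}_j$ with the $G_i$ being one $X$-type two-level operator and several $\zt$-type one-level operators.

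The main obstacle — really the only nontrivial point — is establishing that a unit vector in $\Zzt^m$ has exactly one nonzero entry and that entry is a root of unity. This rests on the positivity fact from \cref{lem:12thbaseineq}, the norm formula \cref{eq:norm12}, and the fact that $\Z[\sqrt 3]$ has a well-behaved notion of "rational part" (the map $a+b\sqrt3\mapsto a$ is additive, and for algebraic-integer norms arising as $u^\dagger u$ the rational part dominates the irrational part in absolute value since $u^\dagger u \geq 0$ as a complex number, giving $|b|\sqrt3 \le a$, hence $a\ge 0$, with $a\geq 1$ when $u\ne 0$ by \cref{lem:12thbaseineq} applied to the two sum-of-quadratic-forms pieces). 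I would want to be slightly careful that \cref{lem:12thbaseineq} as stated handles each of $a^2+c^2+ac$ and $b^2+d^2+bd$ separately, and that at least one of $(a,c)$, $(b,d)$ is nonzero when $u\ne0$, so the rational part is $\geq 1$; combined with the total being $1$, this isolates the single nonzero coordinate and pins it to a unit. Once that is in hand, the rest is just bookkeeping with the explicit one- and two-level operators.
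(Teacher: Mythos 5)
Your proposal is correct and follows essentially the same route as the paper: use \cref{eq:norm12} and the uniqueness of representations in $\Z[\sqrt{3}]$ to split the unit-vector condition into its rational and $\sqrt{3}$ parts, invoke \cref{lem:12thbaseineq} to force a single nonzero entry whose norm is $1$, conclude that entry is a twelfth root of unity, and then clear it with $X$- and $\zeta_{12}$-type operators (no $H'$ needed). One minor imprecision: the equality cases of \cref{lem:12thbaseineq} also include the pairs $(\pm 1,\mp 1)$, giving entries $\pm(1-\zeta_{12}^2)$ and $\pm(\zeta_{12}-\zeta_{12}^3)$ beyond your list $\pm 1,\pm\zeta_{12},\pm\zeta_{12}^2,\pm\zeta_{12}^3$, but since these equal $\pm\zeta_{12}^{-2}$ and $\pm\zeta_{12}^{-1}$ (and $-1=\zeta_{12}^6$), your final conclusion that the entry is a power of $\zeta_{12}$ stands and the construction is unaffected.
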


\begin{proof}
    It suffices to show that $\textbf{u}= \z_{12}^\ell\textbf{e}_{j'}$
    for some integers $j'$ and $\ell$, since, if $\textbf{u}$ is of
    that form, then it can be mapped to $\textbf{e}_j$ by applying the
    appropriate operators of type $\z_{12}$ and $X$. Because
    $\textbf{u}$ is a unit vector, we have $\textbf{u}^\dagger
    \textbf{u} = 1$. Hence, using \cref{eq:norm12}, we get
    \[
    1 = \textbf{u}^\dagger \textbf{u} 
    = \sum_{j} u_j^\dagger u_j
    = \sum_{j} ((a_j^2 + c_j^2 + a_jc_j) + (b_j^2 + d_j^2 + b_jd_j))+(a_j b_j + b_j c_j + c_j d_j)\sqrt{3}.
    \]
    Since every element $\Z[\sqrt{3}]$ can be uniquely expressed as an
    integer linear combination of 1 and $\sqrt{3}$, the equation above
    implies the equations below.
    \begin{align}
        &\sum_{j} ((a_j^2 + c_j^2 + a_jc_j) + (b_j^2 + d_j^2 + b_jd_j)) = 1 \label{cond1} \\
        &\sum_{j} (a_j b_j + b_j c_j + c_j d_j) = 0 \label{cond2}
    \end{align}
    It follows from \cref{lem:12thbaseineq} that \cref{cond1} can only
    be satisfied if there is exactly one index $j$ such that either
    $a_j^2 + c_j^2 +a_jc_j=1$ or $b_j^2 + d_j^2 +b_jd_j =1$, but not
    both. By \cref{lem:12thbaseineq}, this happens precisely when
    $(a_j,c_j)$ is one of $(\pm 1, 0)$, $(0, \pm 1)$, or $(\pm 1, \mp
    1)$, and $(b_j,d_j)$ is $(0,0)$, or when $(b_j,d_j)$ is one of
    $(\pm 1, 0)$, $(0, \pm 1)$, or $(\pm 1, \mp 1)$, and $(a_j,c_j)$
    is $(0,0)$. These 12 solutions all satisfy \cref{cond2} and
    correspond exactly to the possible powers of $\zt$.
\end{proof}

\begin{lemma}
\label{lem:hreduce}
If $u,v\in\Zzt$ are such that $u^\dagger u \equiv_2 v^\dagger v$, then
there exists $\ell$ such that
\[
H'T_{12}^\ell \begin{bmatrix} u \\ v \end{bmatrix} = \begin{bmatrix} u' \\ v' \end{bmatrix}
\]
for some $u',v'\in\Zzt$ such that $u' \equiv v' \equiv 0
\pmod{\delta}$.
\end{lemma}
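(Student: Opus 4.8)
The plan is to use the classification of residues from \cref{lem:res} and \cref{tab:residues} to reduce to a small number of cases, and in each case exhibit the required power $\ell$ by direct computation. First I would observe that, by the hypothesis $u^\dagger u \equiv_2 v^\dagger v$ together with the third item of \cref{lem:res}, there is an integer $m$ with $u \equiv_2 \zt^m v$. Since $T_{12}^\ell$ acts on $\begin{bmatrix} u \\ v\end{bmatrix}$ by sending it to $\begin{bmatrix} u \\ \zt^\ell v\end{bmatrix}$, choosing $\ell = -m$ lets me assume, after this preliminary rotation, that $u \equiv_2 v$, i.e.\ that the two entries are congruent modulo $2$. (Here I am using that the residues $u^\dagger u \pmod 2$ are preserved when $v$ is multiplied by a power of $\zt$, since $(\zt^\ell)^\dagger \zt^\ell = 1$; so the hypothesis is not destroyed.) It then remains to check that $H'$ applied to a vector with $u \equiv_2 v$ produces entries both divisible by $\delta$.

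Next I would compute $H'\begin{bmatrix} u \\ v\end{bmatrix} = \frac{\delta}{2}\begin{bmatrix} u+v \\ u-v\end{bmatrix}$. Using the factorization $2 = \delta^2(-i)$ from \cref{eq:factortwo}, we have $\frac{\delta}{2} = \frac{1}{\delta(-i)} = \frac{i}{\delta}$ (up to the unit $-i$, which I would track carefully: $\delta/2 = (-i)\delta^{-1}$ since $\delta^2(-i)=2$). So $u' = (-i)(u+v)/\delta$ and $v' = (-i)(u-v)/\delta$, and I need $u+v$ and $u-v$ to both be divisible by $\delta$, and moreover the resulting quotients to still be divisible by $\delta$ so that $u',v'$ themselves are $\equiv 0 \pmod\delta$ — i.e.\ I actually need $u \pm v \equiv_2 0$, equivalently $u+v$ and $u-v$ divisible by $\delta^2$. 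But $u \equiv_2 v$ gives exactly $u - v \equiv_2 0$; and then $u+v = (u-v) + 2v \equiv_2 2v \equiv_2 0$ as well, since $2 \equiv_2 0$. Hence both $u+v$ and $u-v$ lie in $(2) = (\delta^2)$, so $(u\pm v)/\delta \in (\delta)$, and therefore $u' \equiv v' \equiv 0 \pmod\delta$, as required.

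The main obstacle I anticipate is bookkeeping of units rather than anything conceptual: one must be careful that $\delta/2$ is a unit multiple of $\delta^{-1}$ and that multiplying by this unit does not affect divisibility by $\delta$ (it does not, since units are coprime to $\delta$). A secondary point to verify cleanly is that the preliminary rotation by $T_{12}^{-m}$ is legitimate — that $m$ exists is precisely the content of the third bullet of \cref{lem:res}, and that rotation preserves the Euclidean-norm residues follows because $\zt$ is a unit with $\zt^\dagger\zt = 1$. With these two points handled, the argument is the short computation above, and the final $\ell$ in the statement is the $-m$ produced by \cref{lem:res} (the $H'$ on the left of the displayed equation is fixed, not chosen).
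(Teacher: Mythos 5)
Your proposal is correct and follows essentially the same route as the paper's proof: invoke the third item of \cref{lem:res} to get $u \equiv_2 \zt^m v$, rotate the second entry by the appropriate power of $T_{12}$, and then observe that $H'$ sends a pair of entries congruent modulo $2$ to $\dzt$ times a vector over $\Zzt$. The only blemishes are trivial bookkeeping slips --- with your congruence convention the rotation should be by $\ell = m$ rather than $-m$, and $\dzt/2 = i\dzt^{-1}$ rather than $(-i)\dzt^{-1}$ --- neither of which affects the argument.
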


\begin{proof}
Let $u$ and $v$ be as stated. Then, by \cref{lem:res}, we have
$u\equiv_2 \zeta_{12}^\ell v$ for some $\ell$. Equivalently, $u \pm \zt^\ell v
\equiv 0 \pmod{2}$, so that $u + \zt^\ell v = 2 w$ and $u - \zt^\ell
v = 2w'$ for some $w,w' \in \Zzt$. We then get
    \[
        H'T_{12}^\ell \begin{bmatrix} u \\ v \end{bmatrix} = H'\begin{bmatrix}
            u \\
            \zt^\ell v 
        \end{bmatrix}
        = \frac{\delta}{2}\begin{bmatrix}
            u + \zt^\ell v\\
            u - \zt^\ell v 
        \end{bmatrix}
        = \frac{\dzt}{2}\begin{bmatrix}
            2w\\
            2w'
        \end{bmatrix} = \dzt\begin{bmatrix}
            w\\
            w'
        \end{bmatrix},
    \]
    which completes the proof.
\end{proof}

\begin{lemma}
    \label{lem:12thstep}
    If $\textbf{u}$ is an $m$-dimensional unit vector with entries in
    $\mathcal{R}_{12}$ and $\lde(\textbf{u}) \geq 1$, then there
    exists a sequence $G_1, \ldots G_q$ of one- and two-level
    operators of type $\z_{12}$, $X$, and $H'$ such that $\lde(G_1
    \cdots G_q \textbf{u}) < \lde(\textbf{u})$.
\end{lemma}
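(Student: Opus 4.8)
The plan is to follow the classical exact-synthesis strategy of \cite{Giles2013a,Amy2020}: clear denominators, use the norm constraint to pin down the residues modulo $\dzt$ of the cleared vector, then match the ``bad'' entries into pairs and apply \cref{lem:hreduce} to each pair to drop the least denominator exponent. Concretely, I would set $\ell = \lde(\textbf{u}) \geq 1$ and let $\textbf{v} = \dzt^\ell \textbf{u}$, which is a vector over $\Zzt$ by definition of $\lde$. Since $\dzt^\dagger \dzt = 2$ (as recorded in the proof of \cref{lem:res}) and $\textbf{u}$ is a unit vector,
\[
\sum_j v_j^\dagger v_j = \textbf{v}^\dagger\textbf{v} = (\dzt^\dagger\dzt)^\ell\, \textbf{u}^\dagger\textbf{u} = 2^\ell,
\]
so, because $\ell \geq 1$, we have $\sum_j v_j^\dagger v_j \equiv_2 0$.

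Next I would analyze residues. By \cref{lem:res} (equivalently, by inspecting the third column of \cref{tab:residues}), each $v_j^\dagger v_j$ is congruent modulo $2$ to exactly one of $0$, $1$, or $\sqrt{3}$, and $v_j^\dagger v_j \equiv_2 0$ precisely when $v_j \equiv_{\dzt} 0$. Let $S_1 = \{j : v_j^\dagger v_j \equiv_2 1\}$ and $S_{\sqrt{3}} = \{j : v_j^\dagger v_j \equiv_2 \sqrt{3}\}$; these sets are disjoint, and $S_1 \cup S_{\sqrt{3}}$ is exactly the set of indices $j$ with $v_j \not\equiv_{\dzt} 0$. Reducing the identity $\sum_j v_j^\dagger v_j = 2^\ell$ modulo $2$ in $\Z[\sqrt{3}]$ and using that $1$ and $\sqrt{3}$ form a $\Z$-basis of $\Z[\sqrt{3}]$ (so that $a + b\sqrt{3} \equiv_2 0$ forces both $a$ and $b$ even), I would conclude that $\abs{S_1}$ and $\abs{S_{\sqrt{3}}}$ are both even. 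I expect this to be the step that carries the weight of the argument: it is where the number-theoretic content of \cref{lem:res} — in particular that a norm is never $\equiv_2 1+\sqrt 3$ — is used to guarantee that the ``bad'' entries can be matched into pairs with equal norm residue.

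Finally, I would partition $S_1$ into pairs and $S_{\sqrt{3}}$ into pairs; by construction every pair $\{j,j'\}$ satisfies $v_j^\dagger v_j \equiv_2 v_{j'}^\dagger v_{j'}$. Assuming $j < j'$, \cref{lem:hreduce} supplies an integer $m_{jj'}$ such that applying the one-level operator of type $\zeta_{12}$ at position $j'$ a total of $m_{jj'}$ times, followed by the two-level operator $(H')_{[j,j']}$, sends the entries in positions $j$ and $j'$ to multiples of $\dzt$ while fixing all other entries. Distinct pairs involve disjoint index sets, so these operations can be performed in any order; after carrying them out for every pair, the resulting vector $\textbf{v}'$ has every entry divisible by $\dzt$, since the entries indexed by $S_1 \cup S_{\sqrt{3}}$ have been cleared and the remaining entries were already $\equiv_{\dzt} 0$. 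Writing $\textbf{v}' = \dzt\,\textbf{w}$ with $\textbf{w}$ over $\Zzt$ and letting $G_1\cdots G_q$ be the product of all one- and two-level operators applied (each commuting with scalar multiplication), we get $G_1\cdots G_q\,\textbf{u} = \dzt^{-\ell}\textbf{v}' = \dzt^{-(\ell-1)}\textbf{w}$, whence $\lde(G_1\cdots G_q\,\textbf{u}) \leq \ell - 1 < \lde(\textbf{u})$, as required. (Operators of type $X$ turn out not to be needed in this step, but they are harmless.)
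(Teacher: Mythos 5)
Your proposal is correct and follows essentially the same route as the paper's proof: clear denominators with $\dzt^{\lde(\textbf{u})}$, use $\textbf{v}^\dagger\textbf{v}=2^{\lde(\textbf{u})}\equiv_2 0$ together with \cref{lem:res} to show that the entries with $v_j^\dagger v_j\equiv_2 1$ and those with $v_j^\dagger v_j\equiv_2\sqrt{3}$ each occur an even number of times, and then pair them up and apply \cref{lem:hreduce} to each pair. Your write-up is, if anything, slightly more explicit than the paper's about why the untouched entries are already divisible by $\dzt$ and about how the $H'T_{12}^\ell$ action is realized by one- and two-level operators.
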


\begin{proof}
    Let $k=\lde (\textbf{u})$ and let $\textbf{v} = \dzt^k\textbf{u}
    \in \Zzt$. We know from \cref{lem:res} that, for $v\in \Zzt$, if
    $v\not\equiv_{\delta} 0$, then $v^\dagger v\equiv_2 1$ or
    $v^\dagger v\equiv_2 \sqrt{3}$. We can therefore write
    $\textbf{v}^\dagger \textbf{v}$ as
    \begin{equation}
    \label{eq:synthstep}
    \textbf{v}^\dagger \textbf{v} 
    = \sum_j v_j^\dagger v_j 
    = \sum_{v_j\equiv_{\delta} 0} v_j^\dagger v_j + \sum_{v_j\not\equiv_{\delta} 0} v_j^\dagger v_j
    = \sum_{v_j\equiv_{\delta} 0} v_j^\dagger v_j + \sum_{v_j^\dagger v_j \equiv_2 1} v_j^\dagger v_j   + \sum_{v_j^\dagger v_j \equiv_2 \sqrt{3}} v_j^\dagger v_j.    
    \end{equation}
    Since $\textbf{u}$ is a unit vector and $\delta^\dagger \delta
    =2$, we have $\textbf{v}^\dagger\textbf{v} =
    \textbf{u}^\dagger\textbf{u}(\dzt^\dagger\dzt)^k = 2^k$. Because
    $k\geq 1$, this implies that $\textbf{v}^\dagger \textbf{v}
    \equiv_2 0$. Hence, by \cref{lem:res}, taking \cref{eq:synthstep}
    modulo 2 yields
    \[
    0 \equiv_2 \textbf{v}^\dagger \textbf{v} \equiv_2  a + b\sqrt{3},
    \]
    where $a$ is the number of $v_j$ such that $v_j^\dagger v_j
    \equiv_2 1$, and $b$ be the number of $v_j$ such that $v_j^\dagger
    v_j \equiv_2 \sqrt{3}$. It then follows that we must have
    $a\equiv_2 b \equiv_2 0$. That is, both $a$ and $b$ are even
    integers. We can therefore group the entries of $\textbf{v}$ that
    are not congruent to 0 modulo $\delta$ into pairs $(v_j, v_{j'})$
    such that $v_j^\dagger v_j \equiv_2 v_{j'}^\dagger
    v_{j'}$. Applying \cref{lem:hreduce} to every such pair reduces
    the least denominator exponent of $\textbf{u}$.
\end{proof}

\begin{lemma}
    \label{lem:column}
    If $\textbf{u}$ is an $m$-dimensional unit vector with entries in
    $\mathcal{R}_{12}$, then, for any $0\leq j \leq m-1$, there exists
    a sequence $G_1, \ldots, G_q$ of one- and two-level operators
    of type $\zeta_{12}$, $X$, and $H'$ such that $G_1 \cdots G_q
    \textbf{u}= \textbf{e}_j$.
\end{lemma}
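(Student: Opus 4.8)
The plan is to prove \cref{lem:column} by induction on $\lde(\textbf{u})$, combining the two cases already isolated in the preceding lemmas. For the base case, suppose $\lde(\textbf{u}) = 0$; then \cref{lem:12thbase} directly provides a sequence $G_1,\ldots,G_q$ of one- and two-level operators of type $\zeta_{12}$, $X$, and $H'$ with $G_1\cdots G_q\textbf{u} = \textbf{e}_j$, which is exactly what we want. For the inductive step, suppose $\lde(\textbf{u})\geq 1$ and assume the claim holds for all unit vectors with entries in $\mathcal{R}_{12}$ of smaller least denominator exponent. By \cref{lem:12thstep}, there is a sequence $G'_1,\ldots,G'_r$ of one- and two-level operators of type $\zeta_{12}$, $X$, and $H'$ such that $\textbf{u}' := G'_1\cdots G'_r\textbf{u}$ satisfies $\lde(\textbf{u}') < \lde(\textbf{u})$.

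The key point to check is that $\textbf{u}'$ is still a unit vector with entries in $\mathcal{R}_{12}$: each of $X$, $H'$, and the one-level operators of type $\zeta_{12}$ is unitary with entries in $\mathcal{R}_{12}$ (recall $H'\in\unitaries(\Dzt)$ and $|\zeta_{12}| = 1$), hence so are the one- and two-level operators built from them, and thus $\textbf{u}'$ lies in $\mathcal{R}_{12}^m$ and has $\textbf{u}'^\dagger\textbf{u}' = \textbf{u}^\dagger\textbf{u} = 1$. We may therefore apply the induction hypothesis to $\textbf{u}'$: there is a sequence $G''_1,\ldots,G''_s$ of one- and two-level operators of type $\zeta_{12}$, $X$, and $H'$ with $G''_1\cdots G''_s\textbf{u}' = \textbf{e}_j$. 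Concatenating, the sequence $G''_1,\ldots,G''_s,G'_1,\ldots,G'_r$ maps $\textbf{u}$ to $\textbf{e}_j$, completing the induction.

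There is essentially no obstacle here: the substantive work — the norm-counting argument bounding the number of ``bad'' entries and the $H'T_{12}^\ell$ reduction on pairs — has already been carried out in \cref{lem:12thbaseineq,lem:res,lem:hreduce,lem:12thbase,lem:12thstep}. The only thing to be careful about is the bookkeeping of composition order (the operators accumulated in the inductive step are applied after those from the base case, so they must be prepended when forming the final product) and the observation, noted above, that the class of vectors under consideration is closed under the allowed operators so that the induction hypothesis genuinely applies.
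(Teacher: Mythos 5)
Your proof is correct and follows essentially the same route as the paper's: induction on $\lde(\textbf{u})$, with \cref{lem:12thbase} as the base case and \cref{lem:12thstep} driving the inductive step. The additional checks you make explicit (closure of unit vectors over $\mathcal{R}_{12}$ under the operators, and the order of composition) are sound and only strengthen the write-up.
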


\begin{proof}
By induction on $\lde(\textbf{u})$. If $\lde(\textbf{u})=0$, then the
result follows from \cref{lem:12thbase}. If $\lde(\textbf{u})\geq 1$,
then, by \cref{lem:12thstep}, there exists a sequence $G_1, \ldots
G_q$ of two-level operators of type $\zeta_{12}$, $X$, and $H'$ such
that $\lde(G_1 \cdots G_q \textbf{u}) < \lde(\textbf{u})$. Now let
$\textbf{u}' = G_1 \cdots G_q \textbf{u}$. By the induction
hypothesis, there exists a sequence $G_1', \ldots, G_{q'}'$ of one-
and two-level operators of type $\zeta_{12}$, $X$, and $H$ such that
$G_1' \cdots G_{\ell'}' \textbf{u}'= \textbf{e}_j$. We therefore have
\[
\textbf{e}_j = G_1' \cdots G_{q'}' \textbf{u}' = G_1' \cdots
G_{q'}' \cdot G_1 \cdots G_q \textbf{u},
\]
which completes the proof.
\end{proof}

\begin{theorem}
    \label{thm:12th}
    A matrix $U$ belongs to $\unitaries(\mathcal{R}_{12})$ if and only
    if $U$ can be expressed as a product of one- and two-level
    operators of type $\z_{12}$, $X$, and $H'$.
\end{theorem}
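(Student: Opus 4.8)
The plan is to prove the two implications separately, with essentially all of the work already packaged into the column-reduction result \cref{lem:column}. The forward implication is immediate: each one-level operator of type $\zt$ is unitary (since $\abs{\zt}=1$) with entries in $\Dzt$, and each two-level operator of type $X$ or $H'$ is unitary with entries in $\Dzt$ (recall $H'\in\unitaries(\Dzt)$); since $\unitaries_m(\Dzt)$ is a group, any product of such operators is again a unitary matrix with entries in $\Dzt$.

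For the converse, I would induct on the dimension $m$ of $U\in\unitaries_m(\Dzt)$. When $m=1$, write $U=[u]$ and apply \cref{lem:column} to the one-dimensional unit vector $[u]$ with $j=0$: the resulting product of one- and two-level operators sends $[u]$ to $[1]$, but in dimension one there are no two-level operators and the one-level operator of type $\zt$ is $[\zt]$, so this product is $[\zt^\ell]$ for some $\ell$, giving $u=\zt^{-\ell}$ and exhibiting $U$ as a power of the one-level operator of type $\zt$. For $m\geq 2$, let $\textbf{u}$ be the last column of $U$, which is an $m$-dimensional unit vector with entries in $\Dzt$. By \cref{lem:column} with $j=m-1$ there is a sequence $G_1,\dots,G_q$ of one- and two-level operators of type $\zt$, $X$, $H'$ with $G_1\cdots G_q\,\textbf{u}=\textbf{e}_{m-1}$, so that $W=G_1\cdots G_q\,U$ lies in $\unitaries_m(\Dzt)$ and has last column $\textbf{e}_{m-1}$. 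Orthonormality of the columns of $W$ then forces every other column to vanish in position $m-1$, so $W$ decomposes as a block-diagonal matrix $U'\oplus 1$ with $U'\in\unitaries_{m-1}(\Dzt)$. By the induction hypothesis $U'$ is a product of one- and two-level operators on its $m-1$ coordinates, which are in particular one- and two-level operators of type $\zt$, $X$, $H'$ on all $m$ coordinates; hence so is $W$, and therefore so is $U=G_q^{-1}\cdots G_1^{-1}\,W$.

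The one point that needs a brief justification is that $G_q^{-1}\cdots G_1^{-1}$ is again a product of one- and two-level operators of type $\zt$, $X$, $H'$; this holds because each such generator has finite order in $\unitaries_m(\Dzt)$ --- the one-level operator of type $\zt$ has order $12$, a two-level operator of type $X$ has order $2$, and a two-level operator of type $H'$ has order $8$ (since $(H')^2=iI_2$) --- so the inverse of each generator is a positive power of it. I do not expect a genuine obstacle here: the number-theoretic content, namely reducing the least denominator exponent via \cref{lem:hreduce,lem:12thstep} and classifying the unit vectors of denominator exponent zero via \cref{lem:12thbase}, has already been carried out in establishing \cref{lem:column}, so what remains is only the routine ``peel off one dimension at a time'' reduction familiar from the exact synthesis of Clifford+$T$ circuits over $\mathcal{R}_8$.
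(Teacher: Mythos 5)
Your proof is correct and follows essentially the same route as the paper: reduce one column at a time to a standard basis vector via \cref{lem:column} and induct on the dimension, then invert the accumulated generators. The only (harmless) differences are that you clear the last column rather than the first and that you spell out, via the finite order of the generators, why their inverses are again products of one- and two-level operators of type $\z_{12}$, $X$, and $H'$ --- a point the paper's proof leaves implicit.
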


\begin{proof}
    The right-to-left direction follows immediately from the fact that
    one- and two-level operators of type $\zeta_{12}$, $X$, and $H'$
    are unitaries with entries in $\mathcal{R}_{12}$. We now prove the
    left-to-right direction. Let $U \in \unitaries(\mathcal{R}_{12})$
    and let $\textbf{u}$ be the first column of $U$. Then $\textbf{u}$
    is a unit vector with entries in $\mathcal{R}_{12}$. Hence, by
    \cref{lem:column}, there exists a sequence $G_1, \ldots, G_q$
    of one- and two-level operators of type $\zeta_{12}$, $X$, and $H'$
    such that $G_1 \cdots G_q \textbf{u}= \textbf{e}_1$.  Since $U$
    and $G_1, \ldots, G_q$ are unitaries with entries in
    $\mathcal{R}_{12}$, we have
    \[
    G_1\cdots G_q U =
    \left[\begin{array}{c|ccc}
    1&0&\cdots&0 \\ \hline
    0&&& \\
    \vdots&&\text{$U'$}& \\
    0&&&
    \end{array}\right],
    \] 
    for some smaller $U' \in \unitaries(\mathcal{R}_{12})$. Repeating
    this process inductively, we ultimately obtain a sequence $F_1,
    \ldots, F_{q'}$ of one- and two-level operators of type $\z_{12}$,
    $X$ and $H'$ such that $F_1 \cdots F_{q'}U = I$. Multiplying by
    $(F_1 \cdots F_{q'})^{-1}$ on both sides then yields the desired
    decomposition.
\end{proof}

\subsection{Exact synthesis}
\label{ssec:exactsynth12}

\begin{theorem}
\label{thm:24th}
    A $2^m \times 2^m$ matrix $U$ can be exactly represented by an
    $m$-qubit circuit over $\mathcal{G}_{24}$ if and only if $U \in
    \unitaries_{2^m}(\mathcal{R}_{24})$. Furthermore, 2 ancillas
    suffice to synthesize a circuit for $U$.
\end{theorem}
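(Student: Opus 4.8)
The plan is to follow the proof of \cref{thm:sixteenth} closely, replacing the catalytic embedding $(\phi_4,\textbf{c}_4)$ by $(\psi_3,\textbf{d}_3)$ and the appeal to \cref{thm:gsone} by an appeal to \cref{thm:12th}. The left-to-right implication is immediate, since every gate of $\mathcal{G}_{24}$ lies in $\unitaries(\mathcal{R}_{24})$. For the converse, let $U\in\unitaries_{2^m}(\mathcal{R}_{24})$. Since $\mathcal{R}_{24}=\mathcal{R}_{3\cdot 2^3}$, \cref{prop:cyclocatemb2} applies with $k=3$ and gives $\psi_3(U)\in\unitaries_{2^{m+1}}(\mathcal{R}_{12})$. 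Unlike in \cref{thm:sixteenth}, no determinant bookkeeping is needed here: by \cref{thm:12th}, $\psi_3(U)$ can be written as a product $G_1\cdots G_q$ of one- and two-level operators of type $\zeta_{12}$, $X$, and $H'$.

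The next step is to show that each $G_i$ can be exactly represented over $\mathcal{G}_{24}$ using at most one ancilla. Since each such operator returns its ancilla to $\textbf{e}_0$, one ancilla can be shared across the whole product, so that $\psi_3(U)$ is represented by an $(m+1)$-qubit circuit $C$ over $\mathcal{G}_{24}$ with a single ancilla. For the one-level operators of type $\zeta_{12}$ this follows from \cref{thm:gstwo}, because $\zeta_{12}=\zeta_{24}^2$, so such an operator is a power of a one-level operator of type $\zeta_{24}$; for the two-level operators of type $X$ it is \cref{thm:gstwo} directly. The only generator not literally among those of \cref{thm:gstwo} is the two-level operator of type $H'$; to handle it, one uses the identity $H'_{[j,j']}=(\zeta_8)_{[j]}(\zeta_8)_{[j']}H_{[j,j']}$ with $\zeta_8=\zeta_{24}^3$ (equivalently, that $H'=SHSHS$ is a Clifford operator), which expresses $H'_{[j,j']}$ as a product of one- and two-level operators of type $\zeta_{24}$, $X$, and $H$, each covered by \cref{thm:gstwo}.

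Finally, I would transfer $C$ to a circuit for $U$ using the catalyst, exactly as in \cref{fig:sixteenth}. Writing $W=T_{24}H$, we have $W\textbf{e}_0=\textbf{d}_3$, so with $D=(I_m\otimes W\otimes I_2)^\dagger\circ C\circ(I_m\otimes W\otimes I_2)$, where $W$ acts on the catalyst qubit and $I_2$ on the ancilla of $C$, the defining property of the catalytic embedding $(\psi_3,\textbf{d}_3)$ yields
\[
D(\textbf{u}\otimes\textbf{e}_0\otimes\textbf{e}_0)
=(I_m\otimes W\otimes I_2)^\dagger\,C\,(\textbf{u}\otimes\textbf{d}_3\otimes\textbf{e}_0)
=(U\textbf{u})\otimes\textbf{e}_0\otimes\textbf{e}_0
\]
for every $\textbf{u}$. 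Hence $D$ represents $U$ over $\mathcal{G}_{24}$ using two ancillas: the catalyst qubit, and the single ancilla reused from \cref{thm:gstwo}.

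Most of the work here is already done, in \cref{thm:12th} and in the setup of $(\psi_3,\textbf{d}_3)$, so the proof is mainly an adaptation of \cref{thm:sixteenth}; the one genuinely new point — and the only step I expect to require care — is checking that \cref{thm:gstwo} is strong enough to synthesize every generator of \cref{thm:12th} over $\mathcal{G}_{24}$, in particular the rewriting of $H'$ in terms of the Giles--Selinger generators, and then confirming that a single shared ancilla beyond the catalyst qubit suffices.
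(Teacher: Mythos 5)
Your proof is correct and follows essentially the same route as the paper: apply $(\psi_3,\textbf{d}_3)$, decompose $\psi_3(U)$ via \cref{thm:12th}, synthesize the generators over $\mathcal{G}_{24}$ using \cref{thm:gstwo}, and recover $U$ with the catalyst conjugation as in \cref{fig:sixteenth}, for two ancillas total. Your explicit reduction of the two-level $H'$ operator to $(\zeta_8)_{[j]}(\zeta_8)_{[j']}H_{[j,j']}$ with $\zeta_8=\zeta_{24}^3$ spells out a detail the paper leaves implicit when it invokes \cref{thm:gstwo} for the $H'$-type generators.
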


\begin{proof}
    The left-to-right follows immediately from the fact that elements
    of $\mathcal{G}_{24}$ belong to
    $\unitaries(\mathcal{R}_{24})$. Now let $U \in
    \unitaries_{2^m}(\mathcal{R}_{24})$ and let
    $(\psi_{3},\textbf{d}_3): \unitaries(\mathcal{R}_{24}) \to
    \unitaries(\mathcal{R}_{12})$ be the catalytic embedding defined
    in \cref{sec:catemb}. Then $\psi_{3}(U) \in
    \unitaries_{2^{m+1}}(\mathcal{R}_{12})$ and can be represented as
    a product of one- and two-level operators of type $\zt$, $X$, and
    $H'$. By \cref{thm:gstwo}, these 1- and 2-level operators can each
    be represented by a circuit over $\mathcal{G}_{24}$ using a single
    ancilla. Hence, there is a circuit $C$ over $\mathcal{G}_{24}$
    that represents $\psi_3(U)$. Using an additional ancilla and
    reasoning as in \cref{thm:sixteenth}, we obtain a circuit over
    $\mathcal{G}_{24}$ for $U$ that uses 2 ancillas.
\end{proof}

We can now use \cref{thm:24th} to obtain an exact synthesis result for
Clifford-cyclotomic gate sets of degree $3\cdot 2^k$, with $k\geq
3$. The proof is very similar to that of \cref{cor:16catemb}, so we
omit it here.

\begin{corollary}
    \label{cor:24catemb}
    Let $k\geq 3$. A $2^m \times 2^m$ matrix $U$ can be exactly
    represented by an $m$-qubit circuit over $\mathcal{G}_{3\cdot2^k}$
    if and only if $U \in
    U_{2^m}(\mathcal{R}_{3\cdot2^k})$. Furthermore, $k-1$ ancillas
    suffice to synthesize a circuit for $U$.
\end{corollary}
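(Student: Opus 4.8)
The plan is to proceed by induction on $k$, closely mirroring the proof of \cref{cor:16catemb}, with \cref{thm:24th} serving as the base case $k=3$ (which indeed uses $2 = k-1$ ancillas). The left-to-right implication is immediate for every $k$: all gates of $\mathcal{G}_{3\cdot 2^k}$ have entries in $\mathcal{R}_{3\cdot 2^k}$, products of such matrices remain in $\matrices(\mathcal{R}_{3\cdot 2^k})$, and the block of the circuit matrix acting on the ancilla-$\textbf{e}_0$ subspace---which is exactly $U$---therefore also has entries in $\mathcal{R}_{3\cdot 2^k}$.

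For the inductive step, suppose the claim holds for some $k\geq 3$ and let $U\in\unitaries_{2^m}(\mathcal{R}_{3\cdot 2^{k+1}})$. Apply the catalytic embedding $(\psi_{k+1},\textbf{d}_{k+1}):\unitaries(\mathcal{R}_{3\cdot 2^{k+1}})\to\unitaries(\mathcal{R}_{3\cdot 2^{k}})$ from \cref{prop:cyclocatemb2} to obtain $\psi_{k+1}(U)\in\unitaries_{2^{m+1}}(\mathcal{R}_{3\cdot 2^{k}})$. By the induction hypothesis, $\psi_{k+1}(U)$ is exactly represented by an $(m+1)$-qubit circuit $C$ over $\mathcal{G}_{3\cdot 2^{k}}$ using $k-1$ ancillas. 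The key observation enabling the recursion is that $T_{3\cdot 2^{k}} = (T_{3\cdot 2^{k+1}})^2$, so every gate of $\mathcal{G}_{3\cdot 2^{k}}$---in particular each gate appearing in $C$---can be realized over $\mathcal{G}_{3\cdot 2^{k+1}}$; hence $C$ may be regarded as a circuit over $\mathcal{G}_{3\cdot 2^{k+1}}$.

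It then remains to inject the catalyst, exactly as in \cref{thm:sixteenth}. Since $T_{3\cdot 2^{k+1}}H\,\textbf{e}_0 = \textbf{d}_{k+1}$, define $D = (I_m\otimes(T_{3\cdot 2^{k+1}}H)^\dagger\otimes I_{k-1})\circ C\circ(I_m\otimes(T_{3\cdot 2^{k+1}}H)\otimes I_{k-1})$, a circuit over $\mathcal{G}_{3\cdot 2^{k+1}}$ on $m+k$ wires, of which $k$ are ancillas: one fresh catalyst wire together with the $k-1$ ancillas internal to $C$. Running $D$ on $\textbf{u}\otimes\textbf{e}_0\otimes\textbf{e}_0^{\otimes(k-1)}$ and invoking the defining property of the catalytic embedding (the displayed computation in the proof of \cref{thm:sixteenth} applies verbatim, with $\textbf{c}_4$ replaced by $\textbf{d}_{k+1}$) yields $(U\textbf{u})\otimes\textbf{e}_0^{\otimes k}$. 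Thus $D$ exactly represents $U$ over $\mathcal{G}_{3\cdot 2^{k+1}}$ using $k = (k+1)-1$ ancillas, completing the induction.

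The argument is essentially routine once \cref{thm:24th} is available; the only points requiring care are the inclusion $\mathcal{G}_{3\cdot 2^{k}}\subseteq\mathcal{G}_{3\cdot 2^{k+1}}$ at the level of representable unitaries (which relies on $T_{3\cdot 2^{k}} = T_{3\cdot 2^{k+1}}^2$ and is what keeps the ancilla count from growing uncontrollably) and the bookkeeping confirming that each inductive step adds exactly one fresh ancilla, so that degree $3\cdot 2^k$ costs precisely $k-1$ ancillas.
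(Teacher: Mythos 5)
Your proof is correct and follows exactly the route the paper intends: the paper omits the proof of this corollary, pointing to the proof of \cref{cor:16catemb}, which is itself just the sketch ``induction with \cref{thm:24th} as base case and the catalyst-injection construction of \cref{fig:sixteenth} in the inductive step'' --- precisely what you carry out. Your explicit attention to the inclusion $T_{3\cdot 2^{k}} = T_{3\cdot 2^{k+1}}^2$ and to the ancilla bookkeeping fills in the details the paper leaves implicit.
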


\section{Conclusion}
\label{sec:conc}

We now know that, for $n=2^k$ and $n=3\cdot 2^k$, $m$-qubit circuits
with ancillas over $\mathcal{G}_n$ correspond precisely to matrices in
$\unitaries_{2^m}(\mathcal{R}_n)$. Yet, many questions in the theory
of Clifford-cyclotomic circuits remain unanswered. Two natural open
problems are the following.
\begin{enumerate}
\item For which values of $n$ does the correspondence between circuits
  over $\mathcal{G}_n$ and matrices in $\unitaries(\mathcal{R}_n)$
  hold?
\item For the values of $n$ for which the correspondence holds, what
  is the smallest number of ancillas required to synthesize circuits
  in the worst case?
\end{enumerate}
A natural initial step in addressing the first of these open problems
is to consider values of $n$ of the form $p\cdot 2^k$, for $p$ a
prime. While it stands to reason that our results might generalize to
such cases, it gets progressively harder to analyze the relevant
residues; this indicates that a novel approach may be needed for such
generalizations.

\section*{Acknowledgements}

We thank the QPL reviewers for their very thoughtful comments. This
work was supported by the Natural Sciences and Engineering Research
Council of Canada (NSERC). The circuit in \cref{sec:16} was drawn
using the Quantikz package \cite{quantikz}.

\bibliographystyle{eptcs}
\bibliography{contributions}

\newpage

\appendix

\section{A Proof of \texorpdfstring{\cref{lem:dyadicunit}}{Lemma~1.2}}
\label{app:lengthone}

Let $R$ be a ring. A \textbf{unit} of $R$ is an element of $R$ that
admits a multiplicative inverse. A ring $R$ is an \textbf{integral
  domain} if $uv \neq 0$, for all nonzero elements $u,v \in R$.  An
element $u \in R$ is an \textbf{associate} of an element $v \in R$ if
there exists a unit $w \in R$ such that $v=wu$. A nonunit, nonzero
element $u \in R$ is called \textbf{prime} if $u$ divides $v$ or $u$
divides $w$, whenever $u$ divides $vw$. An ideal $I \varsubsetneq R$
is called \textbf{prime} if $uv \in I$ implies $u \in I$ or $v \in I$.

We start by recalling two well known facts, before establishing a
property of roots of unity.

\begin{proposition}
\label{prop:assocdiv}
    Let $R$ be a commutative ring with identity and let $u,v \in
    R$. Then $u$ and $v$ are associates if and only if $u \mathrel{|}
    v$ and $v \mathrel{|} u$.
\end{proposition}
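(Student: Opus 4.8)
The plan is to prove the biconditional by establishing its two implications separately. The forward implication is purely formal, and the reverse implication contains the one substantive step.

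\emph{Associates imply mutual divisibility.} Suppose $u$ and $v$ are associates, so that $v = wu$ for some unit $w \in R$. Then $u \mathrel{|} v$ is immediate from $v = wu$, and multiplying both sides by $w^{-1} \in R$ gives $u = w^{-1}v$, so $v \mathrel{|} u$ as well. This direction uses nothing beyond $R$ being a commutative ring with identity.

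\emph{Mutual divisibility implies associates.} Suppose $u \mathrel{|} v$ and $v \mathrel{|} u$, and write $v = au$ and $u = bv$ with $a, b \in R$. Substituting the first equation into the second yields $u = b(au) = (ab)u$, hence $(1 - ab)u = 0$. Here I would invoke that $R$ has no zero divisors (as holds in all the rings relevant to us, e.g. $\mathcal{R}_{16} \subseteq \C$): if $u = 0$, then $v = au = 0 = 1\cdot u$, so $u$ and $v$ are associates via the unit $1$; if $u \neq 0$, then $1 - ab = 0$, so $ab = 1$, meaning $a$ is a unit with inverse $b$, and the equation $v = au$ then exhibits $u$ and $v$ as associates.

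The proof has essentially no obstacle; the only points needing care are the degenerate case $u = 0$ and the appeal to the absence of zero divisors when passing from $(1 - ab)u = 0$ to $ab = 1$. That cancellation step is precisely the crux, since in a general commutative ring mutual divisibility is strictly weaker than being associates, so it is worth making the hypothesis at that step explicit.
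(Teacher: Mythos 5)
Your argument is the standard one and it is correct; note, though, that the paper itself offers no proof of this proposition at all---it is simply recalled as a well-known fact and then applied, in \cref{lem:deltaassoc}, inside rings of cyclotomic integers. So there is nothing to compare against, and the useful question is whether your proof establishes the statement as literally written. It does not quite: your reverse direction assumes $R$ has no zero divisors, a hypothesis absent from the statement. But you are right to insist on it, and in fact the statement as phrased (arbitrary commutative ring with identity, ``associate'' meaning $v=wu$ for a unit $w$) is false in general: mutual divisibility is equivalent to $(u)=(v)$, which in rings with zero divisors can be strictly weaker than differing by a unit (classical counterexamples exist, e.g.\ in rings of continuous functions; the distinction between ``associates'' and ``strong associates'' is studied by Anderson and Vald\'es-Le\'on). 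So your identification of the crux---passing from $(1-ab)u=0$ to $ab=1$---is exactly the point where an extra hypothesis is needed. For the paper's purposes the gap is harmless, since the proposition is only ever invoked in $\Z[\zeta_a]\subseteq\C$, which is an integral domain, and your treatment of the degenerate case $u=0$ is also correct. If one wanted the proposition to hold verbatim for all commutative rings with identity, the fix would be to define associates by $(u)=(v)$ rather than by a unit multiple; with the paper's definition, your added integral-domain hypothesis (or an equivalent cancellation assumption) is genuinely required.
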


\begin{proposition}
    \label{lem:primeideal}
    Let $R$ be a commutative ring with identity and let $u \in
    R$. Then $u$ is prime if and only if the ideal generated by $u$ is
    prime.
\end{proposition}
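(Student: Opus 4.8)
The plan is to prove the equivalence by translating divisibility into ideal membership. Write $(u)=\{ru\mid r\in R\}$ for the principal ideal generated by $u$. Two elementary facts, valid in any commutative ring with identity, do all the work: first, for any $v\in R$ we have $u\mid v$ if and only if $v\in(u)$; second, $u$ is a unit if and only if $(u)=R$ (if $u$ is a unit then $1=u^{-1}u\in(u)$, and if $1\in(u)$ then $1=ru$ for some $r$, so $u$ is invertible). Consequently $u$ is a nonunit exactly when $(u)\varsubsetneq R$, i.e. exactly when $(u)$ is a proper ideal.

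For the forward direction, assume $u$ is prime. Being a nonunit, $u$ satisfies $(u)\varsubsetneq R$ by the second fact. To verify that $(u)$ is a prime ideal, take $v,w\in R$ with $vw\in(u)$; by the first fact $u\mid vw$, so primality of $u$ yields $u\mid v$ or $u\mid w$, and applying the first fact once more gives $v\in(u)$ or $w\in(u)$. Hence $(u)$ is prime. For the converse, assume $(u)$ is a prime ideal. Then $(u)\varsubsetneq R$, so $u$ is a nonunit, and $u\neq 0$ (the zero element being the only value excluded, consistent with the convention that $(0)$ is not treated as prime here; this case does not arise in the application to \cref{lem:dyadicunit}). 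Given $u\mid vw$, the first fact puts $vw\in(u)$, primality of the ideal gives $v\in(u)$ or $w\in(u)$, and the first fact translates this back to $u\mid v$ or $u\mid w$. Thus $u$ is prime.

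There is no genuine obstacle here: the statement is essentially a reformulation of definitions, and the only point needing a little care is the bookkeeping that matches the ``nonunit, nonzero'' clause in the definition of a prime element with the ``proper ideal'' clause in the definition of a prime ideal — in particular the treatment of $u=0$, which is the one case in which the literal bi-implication could fail (namely when $R$ is an integral domain, where $(0)$ is a prime ideal but $0$ is not a prime element).
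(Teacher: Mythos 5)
The paper states this proposition without proof, recalling it as a well-known fact, so there is no argument of its own to compare against; your proof is the standard definition-unwinding one (divisibility $u\mid v$ is membership $v\in(u)$, and $u$ is a unit iff $(u)=R$) and it is correct. Your parenthetical about $u=0$ is well taken: with the paper's definitions the literal bi-implication does fail for $u=0$ when $R$ is an integral domain, but this is an imprecision in the statement itself rather than in your argument, and it is harmless in the only place the proposition is used, namely for $\chi=1-\zeta_{16}\neq 0$ in the proof of \cref{lem:dyadicunit}.
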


\begin{lemma}
    \label{lem:deltaassoc}
    Let $a, b \in \N$ be such that $\gcd(a,b) = 1$. Then $1-\zeta_a^b$
    and $1-\zeta_a$ are associates.
\end{lemma}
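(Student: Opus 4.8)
The plan is to show that $1-\zeta_a^b$ and $1-\zeta_a$ divide each other in $\Z[\zeta_a]$ and then invoke \cref{prop:assocdiv}. The workhorse is the elementary polynomial identity $1-x^\ell = (1-x)(1+x+\cdots+x^{\ell-1})$, valid in $\Z[x]$ for every $\ell\geq 1$, which shows that $1-x$ divides $1-x^\ell$.

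First I would dispose of the trivial case $a=1$: here $\zeta_a=1$, so $1-\zeta_a=1-\zeta_a^b=0$, and $0=1\cdot 0$ exhibits the two as associates. So assume $a\geq 2$, whence also $b\geq 1$ since $\gcd(a,b)=1$. Specializing the identity above at $x=\zeta_a$ and $\ell=b$ gives $1-\zeta_a^b=(1-\zeta_a)(1+\zeta_a+\cdots+\zeta_a^{b-1})$, so $1-\zeta_a \mathrel{|} 1-\zeta_a^b$ in $\Z[\zeta_a]$.

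For the reverse divisibility, I would use that $\gcd(a,b)=1$ to pick a positive integer $c$ with $bc\equiv 1\pmod a$. Since $\zeta_a^a=1$, this yields $\zeta_a^{bc}=\zeta_a$, i.e. $\zeta_a=(\zeta_a^b)^c$. Now specialize the same polynomial identity at $x=\zeta_a^b$ and $\ell=c$: this gives $1-\zeta_a=1-(\zeta_a^b)^c=(1-\zeta_a^b)\bigl(1+\zeta_a^b+\cdots+(\zeta_a^b)^{c-1}\bigr)$, so $1-\zeta_a^b \mathrel{|} 1-\zeta_a$. Having established divisibility in both directions, \cref{prop:assocdiv} concludes that $1-\zeta_a^b$ and $1-\zeta_a$ are associates.

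There is no real obstacle here; the only point requiring a moment's care is the passage $\zeta_a=(\zeta_a^b)^c$, which is exactly where the hypothesis $\gcd(a,b)=1$ enters (through the existence of a modular inverse of $b$ mod $a$), together with remembering to handle the degenerate value $a=1$ separately so that the identity is applied only with positive exponents.
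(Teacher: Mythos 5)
Your proof is correct and follows essentially the same route as the paper: divisibility in both directions via the geometric-series factorization of $1-x^\ell$, with the hypothesis $\gcd(a,b)=1$ supplying an exponent $c$ with $(\zeta_a^b)^c=\zeta_a$, and then an appeal to \cref{prop:assocdiv}. If anything, your version is slightly more careful than the paper's, which writes $1-(\zeta_a^b)^d$ with a B\'ezout coefficient $d$ that could be negative; your choice of a positive inverse $c$ of $b$ modulo $a$ (and the separate treatment of $a=1$) avoids that wrinkle.
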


\begin{proof}
    We have
    \[
    1-\zeta_a^b = (1-\zeta_a)(1+\zeta_a+\zeta_a^2+...+\zeta_a^{b-1}).
    \] 
    Hence, $1-\zeta_a \mathrel{|} 1-\zeta_a^b$.  Now since
    $\gcd(a,b)=1$, there exist $c,d \in \Z$ such that $ac + bd =
    1$. Then $1-\zeta_a = 1-\zeta_a^{ac+bd} = 1-(\zeta_a^b)^d$, and
    thus
    \[
    1-\zeta_a = 1-(\zeta_a^b)^d = (1-\zeta_a^b)(1+\zeta_a^b+(\zeta_a^b)^2+...+(\zeta_a^b)^{d-1}).
    \] 
    Hence, $1-\zeta_a^b \mathrel{|} 1-\zeta_a$. Thus, $1-\zeta_a$ and
    $1-\zeta_a^b$ are associates by \cref{prop:assocdiv}.
\end{proof}

We are now in a position to prove \cref{lem:dyadicunit}, whose
statement we reproduce below.

\begin{lemma*}
    Let $u \in \mathcal{R}_{16}$ be such that $\abs{u} = 1$. Then $u =
    \zeta_{16}^\ell$ for some $0 \leq \ell \leq 15$.
\end{lemma*}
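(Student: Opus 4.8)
The plan is to exploit the structure of $\mathcal{R}_{16} = \Z[\zeta_{16}, 1/2]$ together with the fact that $2$ is (up to a unit) a power of the prime $1 - \zeta_{16}$ in $\Z[\zeta_{16}]$. First I would recall that $\mathcal{O} := \Z[\zeta_{16}]$ is the ring of integers of $\Q(\zeta_{16})$, that $\varphi(16) = 8$, and that $2$ is totally ramified: $(2) = (\lambda)^{8}$ where $\lambda = 1 - \zeta_{16}$, with $\lambda$ prime in $\mathcal{O}$. (The fact that $1 - \zeta_{16}$ divides $2$ up to a unit follows by writing $2 = \prod_{j \text{ odd}} (1 - \zeta_{16}^{j})$ from the factorization of the $16$th cyclotomic polynomial evaluated appropriately, and invoking \cref{lem:deltaassoc} to see all the factors $1 - \zeta_{16}^{j}$ are associates of $\lambda$.) Consequently $\mathcal{R}_{16} = \mathcal{O}[1/\lambda]$, which is a localization of the Dedekind domain $\mathcal{O}$ at the single maximal ideal $(\lambda)$; hence $\mathcal{R}_{16}$ is itself a PID (indeed a DVR? no — a localization away from one prime, so a semilocal PID) whose only nonzero prime ideals are the (finitely many) extensions of the primes of $\mathcal{O}$ other than $(\lambda)$.

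The second step is to pin down the units of $\mathcal{R}_{16}$. An element $u \in \mathcal{O}[1/\lambda]$ is a unit of $\mathcal{R}_{16}$ iff, as an ideal of $\mathcal{O}$, $(u)$ is a power of $(\lambda)$ — equivalently, $u$ is an associate in $\mathcal{O}$ of $\lambda^{t}$ for some $t \in \Z$, i.e. $u = w \lambda^{t}$ with $w \in \mathcal{O}^{\times}$. So I must control the genuine units $w$ of $\mathcal{O} = \Z[\zeta_{16}]$, but only those that can have $|u| = 1$: writing $u = w\lambda^{t}$, the condition $|u| = 1$ is $|w|^{2} |\lambda|^{2t} = 1$, and since $|\lambda|^{2} = |1 - \zeta_{16}|^{2} = 2 - 2\cos(\pi/8) \neq 1$, we are forced to $t = 0$ and $|w| = 1$. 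Thus $u$ is an algebraic integer unit in $\Z[\zeta_{16}]$ all of whose archimedean absolute values... — wait, only the one place; but $u$ being a unit of $\mathcal{O}$ with $|u| = 1$ at the distinguished embedding — to conclude $u$ is a root of unity I need $|u| = 1$ at \emph{every} archimedean place, i.e. $u$ is a Kronecker element. So the key step is: a unit $u$ of $\mathcal{R}_{16}$ with $|u| = 1$ (at the standard embedding) is actually a root of unity. This is where I must work. The argument: such $u$ lies in $\mathcal{O}^{\times}$ (by the $t = 0$ reduction above), and $u \overline{u} = 1$ so $u^{-1} = \overline{u} \in \mathcal{O}$; moreover for every automorphism $\sigma$ of $\Q(\zeta_{16})/\Q$, $\sigma$ commutes with complex conjugation (the Galois group is abelian), so $\sigma(u)\overline{\sigma(u)} = \sigma(u \overline u) = 1$, i.e. $|\sigma(u)| = 1$ at every embedding. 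By Kronecker's theorem, an algebraic integer all of whose conjugates have absolute value $1$ is a root of unity, so $u$ is a root of unity in $\Q(\zeta_{16})$.

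The final step is to identify the roots of unity in $\Q(\zeta_{16})$. The group of roots of unity in $\Q(\zeta_{n})$ is $\mu_{n}$ if $n$ is even (and $\mu_{2n}$ if $n$ is odd); for $n = 16$ this gives exactly the $16$th roots of unity, so $u = \zeta_{16}^{\ell}$ for some $0 \le \ell \le 15$, as claimed. The main obstacle I anticipate is the ramification bookkeeping in Step 1 — establishing cleanly that $\mathcal{R}_{16} = \Z[\zeta_{16}][1/\lambda]$ and that its units are exactly $\{w\lambda^{t} : w \in \mathcal{O}^{\times}, t \in \Z\}$ — since this is what licenses the reduction to honest cyclotomic-integer units; after that, the Galois-equivariance of conjugation plus Kronecker's theorem finishes things quickly. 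An alternative, more elementary route that avoids Kronecker's theorem would be to argue directly with the least denominator exponent: writing $u = u'/\lambda^{\ell}$ with $u' \in \mathcal{O}$ and $\ell = \mathrm{lde}(u)$ minimal, the equation $u'\overline{u'} = \lambda^{\ell}\overline{\lambda}^{\ell} = 2^{\ell}$ (up to a unit) forces, by a norm/residue analysis modulo $\lambda$ analogous to \cref{lem:res} in the $\Z[\zeta_{12}]$ case, that $\ell = 0$ and hence $u \in \mathcal{O}^{\times}$ with $|u| = 1$, reducing to the Kronecker step anyway; I would present whichever is shorter, but I expect the localization-plus-Kronecker argument to be the cleanest.
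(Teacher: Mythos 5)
Your overall architecture (reduce to a unit of $\Z[\zeta_{16}]$, then apply Kronecker's theorem and the fact that the roots of unity of $\Q(\zeta_{16})$ are exactly the $16$th roots of unity) is viable, and its second half is a legitimate alternative to the paper's finish, which instead expands $u^\dagger u=1$ in the integer coordinates $a_0,\dots,a_7$ and reads off that exactly one $a_j=\pm1$. But there is a genuine gap in your reduction to $t=0$. You write $u=w\lambda^{t}$ with $w\in\Z[\zeta_{16}]^{\times}$ and argue that $|w|^{2}|\lambda|^{2t}=1$ together with $|\lambda|^{2}\neq 1$ ``forces $t=0$ and $|w|=1$.'' That inference tacitly assumes every unit of $\Z[\zeta_{16}]$ has absolute value $1$ at the chosen embedding, which is false: the unit group of $\Z[\zeta_{16}]$ has rank $3$ (for instance $1+\sqrt{2}=1+\zeta_{16}^{2}-\zeta_{16}^{6}$ is a unit of absolute value $\neq 1$), so a priori $|w|^{2}$ could equal $|\lambda|^{-2t}$ with $t\neq 0$. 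The conclusion $t=0$ is true, but it needs an argument, and this step is precisely the nontrivial content of the paper's proof: there one uses that $\chi^\dagger=-\zeta_{16}^{\dagger}\chi$, i.e.\ that complex conjugation sends the prime $\chi=1-\zeta_{16}$ to an associate of itself, so that $u\overline{u}=1$ forces the $\chi$-adic valuation of $u$ to vanish (the paper phrases this as: $\ell>0$ would put $v^\dagger v$, hence $v$, in $\langle\chi\rangle$, contradicting minimality of $\ell$). You could repair your step the same way, or by the norm variant of the observation you already make later: since the Galois group is abelian, $|\sigma(u)|=1$ for every embedding $\sigma$, hence $|N_{\Q(\zeta_{16})/\Q}(u)|=1$, while $N(u)=N(w)N(\lambda)^{t}=\pm 2^{t}$, forcing $t=0$.

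Two smaller points. First, your characterization of the units of $\mathcal{R}_{16}=\Z[\zeta_{16}][1/\lambda]$ is only relevant because $u$ is in fact a unit of $\mathcal{R}_{16}$; say explicitly that $|u|=1$ gives $u^{-1}=\overline{u}\in\mathcal{R}_{16}$ (the ring is closed under conjugation), which you only note in passing later. Second, once $t=0$ is properly established, your Kronecker-plus-$\mu_{16}$ finish is correct and arguably cleaner than the paper's explicit coefficient computation, at the cost of invoking Kronecker's theorem and the determination of the roots of unity in $\Q(\zeta_{16})$ rather than staying elementary.
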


\begin{proof}
    Let $\chi = 1-\zeta_{16} \in \mathcal{R}_{16}$. Then $\chi$ is a
    prime element in $\Z[\zeta_{16}]$ and, by \cref{lem:primeideal},
    $\langle \chi \rangle$ is a prime ideal of $\mathcal{R}_{16}$. By
    \cref{lem:deltaassoc}, we can decompose $2$ in $\Z[\zeta_{16}]$ as
    \begin{align*}
        2 &= (1+i)(1-i) \\
        &= (1-i)^2 u_1 \\
        &= (1-\zeta_8)^2(1+\zeta_8)^2 u_1 \\
        &= (1-\zeta_8)^4 u_2 u_1 \\
        &= (1-\zeta_{16})^4 (1+\zeta_{16})^4 u_2 u_1 \\
        &= (1-\zeta_{16})^8 u_3 u_2 u_1 \\
        &= \chi^8 u_3 u_2 u_1,
    \end{align*}
    where $u_1$, $u_2$, and $u_3$ are units in $\Z[i]$, $\Z[\zeta_8]$,
    and $\Z[\zeta_{16}]$, respectively. Hence, any $u \in
    \mathcal{R}_{16}$, can be written as $u = v/\chi^{\ell}$ with
    $\ell \in \N$ and $v\in \Z[\zeta_{16}]$. Now let $u\in
    \mathcal{R}_{16}$ be such that $\abs{u}=1$ and write $u$ as $u =
    v/\chi^\ell$ with $\ell$ minimal.  Observe that $\chi^\dagger = 1
    - \zeta_{16}^\dagger = -\zeta_{16}^\dagger\chi$. We hence have
    \[
    1 = \abs{u}^2 = u^\dagger u = \frac{v^\dagger v}{(-\zeta_{16}^{\dagger})^\ell \chi^{2\ell}},
    \]
    so that $v^\dagger v = \chi^{2\ell}\cdot
    (-\zeta^{\dagger}_{16})^\ell$. If $\ell > 0$, we have $v^\dagger v
    \in \langle \chi \rangle$. Since $\chi$ is prime, this implies
    that we must have either $v \in \langle \chi \rangle$ or
    $v^\dagger \in \langle \chi \rangle$. But $v^\dagger \in \langle
    \chi \rangle$ would imply $v \in \langle \chi \rangle$ since
    $\chi$ and $\chi^\dagger$ are associates. Hence, if $\ell>0$, then
    $v\in \langle \chi \rangle$, which contradicts the minimality of
    $\ell$. It must thus be the case that $\ell=0$, so that $u$ is in
    fact an element of $\Z[\zeta_{16}]$. Writing $u$ as $u =
    \sum_{j=0}^7 a_j\zeta_{16}^j$, with $a_j \in \Z$, we then get
    \begin{align*}
        1 &= u^\dagger u \\
        &= \sum_{j=0}^7 a_j^2 + 2Re(\zeta_{16})((\sum_{j=0}^6 a_j a_{j+1}) - a_0 a_7) + 2Re(\zeta_{16}^2)((\sum_{j=0}^5 a_j a_{j+2}) - a_0 a_6 - a_1 a_7) \\
        &+ 2Re(\zeta_{16}^3)((\sum_{j=0}^4 a_j a_{j+3}) - a_0 a_5 - a_1 a_6 - a_2 a_7).
    \end{align*}
    The above equation holds when exactly one $a_j = \pm 1$ and the
    rest are zero. Hence $u = \zeta_{16}^\ell$ for some $0 \leq \ell
    \leq 15$, as desired.
\end{proof}

\end{document}